\newcommand*\diff{\mathop{}\!\mathrm{d}} 
\newcommand{\Zd}{\mathbb{Z}^d}
\newcommand{\Norm}[1]{\left\|#1\right\|}
\newcommand{\psiq}{\psi^{(q)}}
\newcommand{\psiqjz}{\psi^{(q)}_{j,z}}
\newcommand{\betaqjz}{\beta^{(q)}_{j,z}}
\newcommand{\betahatqjz}{\hat{\beta}^{(q)} _{j,z}}
\def\fs{\kern 0.33em}
\def\ppn{\vskip 6pt \noindent }
\def\R{{\mathbb{R}}}
\def\N{{\mathbb{N}}}
\def\Z{{\mathbb{Z}}}
\def\P{{\mathbb{P}}}
\def\E{{\mathbb{E}}}
\newcommand{{\Xs}}{{\cal X}}
\newcommand{{\Ys}}{{\cal Y}}
\newcommand{{\Ls}}{{\cal L}}
\newcommand{{\Ss}}{{\cal S}}
\newcommand{{\Ms}}{{\cal M}}
\newcommand{{\Gs}}{{\cal G}}
\newcommand{{\Hs}}{{\cal H}}
\newcommand{{\Ns}}{{\cal N}}
\newcommand{{\Is}}{{\cal I}}
\newcommand{{\As}}{{\cal A}}
\newcommand{{\Bs}}{{\cal B}}
\newcommand{{\Cs}}{{\cal C}}
\newcommand{{\Rs}}{{\cal R}}
\newcommand{{\Us}}{{\cal U}}
\newcommand{{\Es}}{{\cal E}}
\newcommand{{\Fs}}{{\cal F}}
\newcommand{{\pp}}{{\mathbf p}}
\newcommand{{\Ps}}{{\cal P}}
\newcommand{{\KK}}{{\mathbf K}}
\newcommand{{\HH}}{{\mathbf H}}
\newcommand{{\II}}{{\mathbf I}}
\newcommand{{\yy}}{{\mathbf y}}
\newcommand{{\ab}}{{\mathbf a}}
\newcommand{{\toL}}{{\ \overset{\mathcal{L}}{\longrightarrow}\ }}
\numberwithin{equation}{section}
\numberwithin{figure}{section}
\numberwithin{table}{section}
\newtheorem{lemma}{Lemma}[section]
\newtheorem{theorem}{Theorem}[section]
\newtheorem{assumption}{Assumption}[section]
\newtheorem{proposition}{Proposition}[section]
\newtheorem{corollary}{Corollary}[section]
\theoremstyle{remark}
\DeclareMathOperator{\var}{\mathbb{V}ar}
\begin{document}

\title{Shape-preserving wavelet-based multivariate density estimation}
\author{\sc{Carlos Aya Moreno}  \and \sc{Gery Geenens}\thanks{Corresponding author: {\tt ggeenens@unsw.edu.au}, School of Mathematics and Statistics, UNSW Sydney, NSW 2052 (Australia), tel +61 2 938 57032, fax +61 2 9385 7123 } \and and \qquad  \sc{Spiridon Penev}}
\date{School of Mathematics and Statistics, \\ UNSW Sydney, Australia}
\maketitle
\thispagestyle{empty}

\begin{abstract}
\noindent Wavelet estimators for a probability density $f$ enjoy many good properties, however they are not `shape-preserving' in the sense that the final estimate may not be non-negative or integrate to unity. A solution to negativity issues may be to estimate first the square-root of $f$ and then square this estimate up. This paper proposes and investigates such an estimation scheme, generalising to higher dimensions some previous constructions which are valid only in one dimension. The estimation is mainly based on nearest-neighbour-balls. The theoretical properties of the proposed estimator are obtained, and it is shown to reach the optimal rate of convergence uniformly over large classes of densities under mild conditions. Simulations show that the new estimator performs as well in general as the classical wavelet estimator, while automatically producing estimates which are {\it bona fide} densities. 
\end{abstract}

\section{Introduction}\label{sec:intro}

The mathematical theory of wavelets offers a powerful tool for approximating possibly irregular functions or surfaces. It has been successfully applied in many different fields of applied mathematics and engineering, see the classical references on the topic \citep{Meyer92,Daubechies92}, or \cite{Strang89,Strang93} for shorter reviews. In statistics, it provides a convenient framework for some nonparametric problems, in particular density estimation and regression. As opposed to most of their competitors, such as kernels or splines, wavelet-based estimators provide highly adaptive estimates by exploiting the localisation properties of the wavelets. This translates into good global properties even when the estimated function presents sharp features, such as acute peaks or abrupt changes. Indeed, wavelet estimators are (near-) optimal in some sense over large classes of functions \citep{Kerkyacharian93a,DJKP93,DJKP95,DonohoJohnstone94a,DonohoJohnstone95b,DonohoJohnstone96,DonohoJohnstone98,Fan93}. \cite{Hardle98,Vidakovic99} and \cite{Nason08} give comprehensive reviews of wavelet methods applied to statistics.

\ppn For any function $\phi: \R \to \R$, define its rescaled and translated version $\phi_{j,z} = 2^{j/2} \phi(2^j x -z)$, $j \in \N$, $z \in \Z$, as is customary in the wavelet framework. Set so-called `father' $\varphi: \R \to \R$ and `mother' $\psi: \R \to \R$ wavelets, and a certain basic `resolution' level $j_0 \in \N$. Then, the sequence $\{\varphi_{j_0,z},\psi_{j,z}; j=j_0,\ldots,\infty, z \in \Z\}$ is known to form an orthonormal basis of $L_2(\R)$ associated with a certain multiresolution analysis system \citep[Chapter 2]{Meyer92}. This means that any square-integrable function $f \in L_2(\R)$ can be expanded into that wavelet basis as
\begin{equation} f(x) = \sum_{z \in \Z} \alpha^*_{j_0,z} \varphi_{j_0,z}(x) + \sum_{j=j_0}^\infty \sum_{z \in \Z} \beta^*_{j,z} \psi_{j,z}(x), \label{eqn:uniwavexp} \end{equation}
with $\forall j \in \N$ and $z \in \Z$, $\alpha^*_{j,z} = \int_\R \varphi_{j,z}(x) f(x)\diff x$ and $\beta^*_{j,z} = \int_\R \psi_{j,z}(x) f(x)\diff x $. The term $\sum_{z \in \Z} \alpha^*_{j_0,z} \varphi_{j_0,z}(x)$ is called the `trend' at level $j_0$, while, for each level $j \geq j_0$, $\sum_{z \in \Z} \beta^*_{j,z} \psi_{j,z}(x)$ is the `detail' at level $j$. A key feature of a multiresolution representation such as (\ref{eqn:uniwavexp}) is that, for any $j \geq j_0$, the trend at level $j + 1$ coincides with the trend at level $j$ supplemented with the detail at level $j$. Specifically,
\begin{equation} \sum_{z \in \Z} \alpha^*_{j+1,z} \varphi_{j+1,z}(x) = \sum_{z \in \Z} \alpha^*_{j,z} \varphi_{j,z}(x) + \sum_{z \in \Z} \beta^*_{j,z} \psi_{j,z}(x). \label{eqn:trend}\end{equation}

\ppn When $f$ in (\ref{eqn:uniwavexp}) is a probability density, noting that $\alpha^*_{j_0,z} = \E(\varphi_{j_0,z}(X))$ and $\beta^*_{j,z}  = \E(\psi_{j,z}(X))$ paves the way for their estimation, upon observing a sample from $f$, by empirical averages, say $\widehat{\alpha}^*_{j_0,z}$ and $\widehat{\beta}^*_{j,z}$. In addition, for any practical purpose the infinite expansion (\ref{eqn:uniwavexp}) needs to be truncated after a finite number of terms, say $J \geq j_0$ -- in the wavelet jargon, one says that $f$ is approximated to the resolution level $J$. So, a wavelet estimator for $f$ writes
\[\hat{f}_J(x) = \sum_{z \in \Z} \widehat{\alpha}^*_{j_0,z} \varphi_{j_0,z}(x) + \sum_{j=j_0}^J \sum_{z \in \Z} \widehat{\beta}^*_{j,z} \psi_{j,z}(x), \]
which may ultimately include some thresholding of the estimated coefficients. Note that the sums over $z$ are essentially finite if the wavelets $\varphi$ and $\psi$ have compact support, as it is usually assumed.

\ppn Extending this framework to the multivariate case is conceptually straightforward. We assume that an orthogonal wavelet basis for $L_2(\R^d)$ is available -- see \citet[Section 3.6]{Meyer92} for details about existence of such a basis. That is, there exist functions $\varphi: \R^d \to \R$ and $\psi^{(q)}: \R^d \to \R$, $q \in Q = \{1,\ldots,2^d-1\}$, such that $\{\varphi_{j_0,z},\psi^{(q)}_{j,z}; j=j_0,\ldots,\infty, z \in \Z^d, q \in Q\}$ forms an orthonormal basis of $L_2(\R^d)$, with $\varphi_{j_0,z}\left(x\right) = 2^{d\,j_0 /2}\varphi\left(2^{j_0} x - z \right)$ and $\psiqjz(x) = 2^{d\,j /2} \psiq(2^j x - z)$. The functions $\psi^{(q)}$ are typically obtained via a tensor product construction \citep[Sections 3.3-3.4]{Meyer92}. Then, any $d$-variate function $f \in L_2(\R^d)$ can be written
\begin{equation} f(x) = \sum_{z \in \Z^d} \alpha^*_{j_0,z}\varphi_{j_0,z}(x) + \sum_{j=j_0}^\infty \sum_{z \in \Z^d} \sum_{q \in Q} \beta^{*(q)}_{j,z}  \psi^{(q)}_{j,z}(x), \label{eqn:multwavexp} \end{equation}
where $\alpha^*_{j,z} = \int_{\R^d} \varphi_{j,z}(x) f(x)\diff x$ and $\beta^{*(q)}_{j,z} = \int_{\R^d} \psi_{j,z}^{(q)}(x) f(x)\diff x$. When $f$ is a density, estimation of these coefficients, and hence of $f$ itself, follows in the same way as in one dimension.

\ppn One major drawback, though, of such wavelet-based estimators is that they are in general not `shape-preserving'. When estimating a probability density $f$, that means that the resulting estimator $\hat{f}_J$ may neither be non-negative, nor integrate to one \citep{Dechevsky97,Dechevsky98}. Usually, simple rescaling solves the integrability issue, but overcoming the non-negativity issue requires caution. One way to address it is to first construct a wavelet estimator of $g \doteq \sqrt{f}$ which, when squared up, would obviously produce an estimator of $f$ automatically satisfying the non-negativity constraint. Consider the univariate case. Clearly, $g \in L_2(\R)$, as $\int_\R g^2(x)\diff x = \int_\R f(x)\diff x = 1$, hence we can write its expansion (\ref{eqn:uniwavexp}):
\[g(x) =   \sum_{z \in \Z} \alpha_{j_0,z} \varphi_{j_0,z}(x) + \sum_{j=j_0}^\infty \sum_{z \in \Z} \beta_{j,z} \psi_{j,z}(x),\]
where
\begin{multline} \alpha_{j,z} = \int_\R \varphi_{j,z}(x) g(x)\diff x = \int_\R \varphi_{j,z}(x) \sqrt{f}(x)\diff x \ \text{ and }\  \beta_{j,z} = \int_\R \psi_{j,z}(x) g(x)\diff x = \int_\R \psi_{j,z}(x) \sqrt{f}(x)\diff x. \label{eqn:defcoef} \end{multline}
Difficulty in estimating these coefficients arises as $\alpha_{j,z} =\E(\varphi_{j,z}(X)/\sqrt{f}(X))$ and $\beta_{j,z} =\E(\psi_{j,z}(X)/\sqrt{f}(X))$ can no more be estimated directly by sample averages. \cite{Pinheiro97} got around the presence of the unknown factor $1/\sqrt{f}$ in these expectations by plugging in a pilot estimator of $f$. Rather, \cite{Penev97} suggested a more elegant construction based on order statistics and spacings. Unfortunately, direct application of their idea is limited to the univariate case, as spacings are not defined in more than one dimension. Yet, the need for a multivariate extension of the `Dechevsky-Penev' construction was explicitly called for by \cite{McFadden03} in his Nobel Prize lecture.  \cite{Cosma07} and \cite{Peter08} attempted such extension but losing much of the initial flavour of the idea.

\ppn The aim of this paper is to suggest and study a wavelet estimator of $\sqrt{f}$ directly inspired by \cite{Penev97}'s construction, hence keeping its simplicity and attractiveness, but available in any dimension. It will be shown in Section \ref{subsec:motiv} that the volume of the smallest ball centred at $x$ and covering at least $k$ observations of the sample (for some $k \geq 1$), can act in some sense as a surrogate for a `multivariate spacing'. The suggested estimator will then make use of $k$-nearest neighbour ideas, as will be formally defined in Section \ref{subsec:def}. Sections \ref{sec:coeff} and \ref{sec:gJ} respectively present the asymptotic properties of the proposed estimators of the wavelet coefficients and of the density estimator as a whole. Section \ref{sec:numE} assesses the practical  performance of the estimator through a simulation study and a real data application. Section \ref{sec:ccl} concludes and offers some perspectives of future research.

\section{Definition of the estimator}

\subsection{Motivation} \label{subsec:motiv}

Let $\Xs = \{X_1,\ldots,X_n\}$ be a random sample from an unknown $d$-dimensional distribution $F$ admitting a density $f$ on $\R^d$. Denote by $X_{(k);i}$ the $k$th closest observation from $X_i$ among the other points of $\Xs$. Define $R_{(k);i} = \|X_{(k);i}-X_i \|$ the Euclidean distance between $X_i$ and $X_{(k);i}$, and 
\begin{equation} V_{(k);i} = c_0 R_{(k);i}^d \qquad \text{ where } \qquad c_0 = \frac{\pi^{d/2}}{\Gamma(d/2+1)}, \label{eqn:vol}\end{equation}
the volume of the ball of radius $R_{(k);i}$ centred at $X_i$ -- hence it is the smallest ball centred at $X_i$ containing at least $k$ other observations from $\Xs$. It is known \cite[Proposition 2]{Ranneby05} that, conditionally on $X_i$,
\begin{equation*} n V_{(1);i} \toL  \text{Exp}\left(f(X_i)\right)\qquad \text{ as }n \to \infty,  \end{equation*}
meaning that \citep[Section 10.5]{Johnson94}
\begin{equation} \sqrt{n V_{(1);i}} \toL  \text{Rayleigh}\left(\frac{1}{\sqrt{2}\sqrt{f}(X_i)}\right)\qquad \text{ as }n \to \infty. \label{eqn:convlawnV} \end{equation}
Now, consider an arbitrary square-integrable function $\phi: \R^d \to \R$, and define
\begin{equation} S_n \doteq \frac{2}{\sqrt{\pi}}\,\frac{1}{\sqrt{n}}\sum_{i=1}^n \phi(X_i) \sqrt{V_{(1);i}}. \label{eqn:S} \end{equation}
By the Law of Iterated Expectations, we have
\[ \E(S_n) = \E\left(\frac{2}{\sqrt{\pi}} \phi(X_i)\E\left(\sqrt{nV_{(1);i}} \ \big|X_i\right)\right). \]
The expectation of a Rayleigh$(\sigma)$-random variable is known to be $\sigma \sqrt{\pi/2}$. If the convergence in law (\ref{eqn:convlawnV}) implies the convergence of the moments (this is indeed the case here as will be formally derived later), then
\[\E(S_n) \to \E\left(\frac{2}{\sqrt{\pi}}\phi(X_i) \frac{\sqrt{\pi}}{2\sqrt{f}(X_i)} \right) = \int_{\R^d} \frac{\phi(x)}{\sqrt{f}(x)}\,f(x)\diff x = \int_{\R^d} \phi(x)\sqrt{f}(x)\diff x. \]
Hence, $S_n$ is an asymptotically unbiased estimator of $\int_{\R^d} \phi(x)\sqrt{f}(x)\diff x$. This fact naturally suggests estimating the wavelet coefficients (\ref{eqn:defcoef}) by statistics of type (\ref{eqn:S}), which is the idea formally investigated in this paper.

\subsection{Definition} \label{subsec:def}

Let $g = \sqrt{f}$, where $f$ is the $d$-dimensional density to estimate. As $g \in L_2(\R^d)$ always, we have, by (\ref{eqn:multwavexp}),
\begin{equation*}  g(x) = \sum_{z \in \Z^d} \alpha_{j_0,z}\varphi_{j_0,z}(x) + \sum_{j=j_0}^\infty \sum_{z \in \Z^d} \sum_{q \in Q} \beta^{(q)}_{j,z}  \psi^{(q)}_{j,z}(x), \end{equation*}
with, for all $j \in \N$, $z \in \Z^d$ and $q \in Q$,
\[\alpha_{j,z} = \int_{\R^d} \varphi_{j,z}(x) \sqrt{f}(x)\diff x \qquad \text{ and } \qquad \beta^{(q)}_{j,z} = \int_{\R^d} \psi_{j,z}^{(q)}(x) \sqrt{f}(x)\diff x. \]
The approximation of $g$ to the resolution level $J \geq j_0$ is
\begin{align} g_{J}(x) & = \sum_{z \in \Z^d} \alpha_{j_0,z}\varphi_{j_0,z}(x) + \sum_{j=j_0}^{J} \sum_{z \in \Z^d} \sum_{q \in Q} \beta^{(q)}_{j,z}  \psi^{(q)}_{j,z}(x) \label{eqn:gJwavexp} \\
& = \sum_{z \in \Z^d} \alpha_{J+1,z}\varphi_{J+1,z}(x),  \label{eqn:gJwavexp2} \end{align}
where the second equality follows by analogy with (\ref{eqn:trend}). 

\ppn Now, motivated by the observations made in Section \ref{subsec:motiv}, we define the estimators of the wavelet coefficients $\alpha_{j,z}$'s and $\beta^{(q)}_{j,z}$'s in (\ref{eqn:gJwavexp})-(\ref{eqn:gJwavexp2}) as
\begin{align}
\hat{\alpha }_{j,z} &=  \frac{\Gamma(k)}{\Gamma(k+1/2)}\,\frac{1}{\sqrt{n}}\sum _{i=1}^n \varphi _{j,z}\left(X_i\right) \sqrt{V_{(k);i}}, \qquad j \in \N;z \in \Zd \label{eqn:coefficients1} \\
\betahatqjz &=  \frac{\Gamma(k)}{\Gamma(k+1/2)} \,\frac{1}{\sqrt{n}}\sum _{i=1}^n \psiqjz\left(X_i\right) \sqrt{V_{(k);i}}, \qquad j \in \N;z \in \Zd;  q \in Q, \label{eqn:coefficients2}
\end{align}
for some integer $k \geq 1$. The coefficient $\frac{\Gamma(k)}{\Gamma(k+1/2)}$ guarantees the consistency of these estimators, as will arise from the proof of Proposition \ref{prop:coefconsist} below. Note that, for $k=1$, $\frac{\Gamma(1)}{\Gamma(3/2)} = \frac{2}{\sqrt{\pi}}$, as it was anticipated in Section \ref{subsec:motiv}. Also, in the case $d=1$, when the volume of a ball amounts to the width of an interval, (\ref{eqn:coefficients1}) and (\ref{eqn:coefficients2}) can easily be compared to \cite{Penev97}'s estimators (their equations (3.2) and (3.3)). Although not identical, they definitely have the same flavour and are asymptotically equivalent.

\ppn Plugging (\ref{eqn:coefficients1}) and (\ref{eqn:coefficients2}) into the expansion (\ref{eqn:gJwavexp}) produces the estimator
\begin{equation}
\hat{g}_{J}(x)=\sum _{z \in \Zd} \hat{\alpha }_{j_0,z} \varphi _{j_0,z}(x) + \sum_{j=j_0}^{J} \sum_{z \in \Z^d} \sum_{q \in Q} \betahatqjz \, \psiqjz(x), \label{eqn:hatgJ}
\end{equation}
which is also
\begin{equation} \hat{g}_{J}(x) = \sum_{z \in \Z^d} \hat{\alpha}_{J+1,z}\varphi_{J+1,z}(x) \label{eqn:hatgJ2} \end{equation}
by (\ref{eqn:gJwavexp2}) and the properties of multiresolution analysis. Squaring this up provides an estimator $\hat{f}_J$ of $f$. As already noted in \cite{Penev97}, estimating $f$ by squaring up an estimate of $\sqrt{f}$ has the additional advantage of providing an easy way for normalising the density estimate. Specifically, enforcing the condition $1 = \int_{\R^d} \hat{f}(x)\diff x = \int_{\R^d} \hat{g}^2_J(x)\diff x$ amounts to imposing
\begin{equation} \sum _{z \in \Zd} \hat{\alpha}^2_{j_0,z}  + \sum_{j=j_0}^{J} \sum_{z \in \Z^d} \sum_{q \in Q} \hat{\beta}_{j,z}^{(q)2} =1, \label{eqn:unitint} \end{equation}
given that the wavelets are orthonormal. If this sum is not 1 after raw estimation of the coefficients by (\ref{eqn:coefficients1}) and (\ref{eqn:coefficients2}) but, say, another constant $\kappa$, it is enough to divide each estimated coefficient by $\sqrt{\kappa}$ for enforcing (\ref{eqn:unitint}). Conventional wavelet estimators do not enjoy such a convenient way of normalising.

\ppn In the following section, the asymptotic properties of the coefficient estimators (\ref{eqn:coefficients1}) and (\ref{eqn:coefficients2}) are obtained. The asymptotic properties of the estimator (\ref{eqn:hatgJ})-(\ref{eqn:hatgJ2}) for $\sqrt{f}$ and the ensuing estimator $\hat{f}_J = \hat{g}_J^2$ for $f$ will be obtained in Section \ref{sec:gJ}.

\section{Asymptotic properties of the estimators of the wavelet coefficients} \label{sec:coeff}

Throughout the paper we work under the following two standard assumptions.
\begin{assumption} \label{ass:sample} The sample $\Xs=\{X_1,\ldots,X_n\}$ consists of i.i.d.\ replications of a random variable $X \in \R^d$ whose distribution $F$ admits a density $f$. \end{assumption}
\begin{assumption} \label{ass:wav} The functions $\varphi$ and $\psi^{(q)}$ ($q \in Q$), have compact support on $\R^d$ and are bounded. Defining $\varphi_{j_0,z}\left(x\right) = 2^{d\,j_0 /2}\varphi\left(2^{j_0} x - z \right)$ and $\psiqjz(x) = 2^{d\,j /2} \psiq(2^j x - z)$, $\{\varphi_{j_0,z},\psi^{(q)}_{j,z}; j=j_0,\ldots,\infty, z \in \Z^d, q \in Q\}$ is an orthonormal basis of $L_2(\R^d)$.
\end{assumption}
Now, the main ingredients in (\ref{eqn:coefficients1}) and (\ref{eqn:coefficients2}) are the $V_{(k);i}$'s, which are `$k$th-nearest-neighbour'-type of quantities whose behaviour has been extensively studied in the literature \citep{Mack79,Hall83,Percus98,Evans02,Evans08}. Good properties for such quantities require the underlying density $f$ to be well-behaved in the following sense. 
\begin{assumption} \label{ass:f}  The density $f$ has convex compact support $C \subset \R^d$, with $\sup_{x,y \in C} \|x-y\| = c_1 < \infty$. It is bounded and bounded away from $0$ on $C$, i.e., there exist constants $a_1$ and $a_2$ such that $\inf_{x \in C} f(x) = a_1>0$ and $\sup_{x \in C} f(x) = a_2 <\infty$. In addition, $f$ is differentiable on $C$, with uniformly bounded partial derivatives of the first order.
\end{assumption} 

We have then the following result.

\begin{proposition} \label{prop:coefconsist} Under Assumptions \ref{ass:sample}-\ref{ass:f}, for all $j = j_0, \ldots, J$, $z \in \Z^d$ and $q \in Q$, the estimators (\ref{eqn:coefficients1}) and (\ref{eqn:coefficients2}) are such that
\begin{align*}
\E(\hat{\alpha}_{j,z}) =  \alpha_{j,z} + O(n^{-1/d}), \qquad & \var(\hat{\alpha}_{j,z}) = k^3\left(\frac{\Gamma(k)}{\Gamma(k+1/2)}\right)^2 O(n^{-1}) \\
\E(\hat{\beta}^{(q)}_{j,z}) =  \beta^{(q)}_{j,z} + O(n^{-1/d}), \qquad & \var(\hat{\beta}^{(q)}_{j,z}) = k^3\left(\frac{\Gamma(k)}{\Gamma(k+1/2)}\right)^2 O(n^{-1}),
\end{align*}
as $n \to \infty$. In particular, if $k$ is such that $k^{3/2}\frac{\Gamma(k)}{\Gamma(k+1/2)} = o(n^{1/2})$, then
\[\E\left(\left(\hat{\alpha}_{j,z}-\alpha_{j,z}\right)^2\right) \to 0 \qquad \text{ and } \qquad \E\left(\left(\hat{\beta}^{(q)}_{j,z}-\betaqjz\right)^2\right) \to 0 \]
as $n \to \infty$, and the estimators are consistent.
\end{proposition}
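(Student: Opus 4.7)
Since $\hat{\alpha}_{j,z}$ and $\hat{\beta}^{(q)}_{j,z}$ both have the form $\hat{\theta} = c_k n^{-1/2} \sum_{i=1}^n h(X_i)\sqrt{V_{(k);i}}$ with $c_k = \Gamma(k)/\Gamma(k+1/2)$, I would treat them simultaneously by letting $h$ stand for the relevant wavelet function. The backbone of the argument is a single distributional identity: conditional on $X_1 = x$, the mass $T := p_x(R_{(k);1})$, where $p_x(r) = \int_{B(x,r)\cap C} f$, equals the $k$-th order statistic of $n-1$ iid $U(0,1)$ variables, hence $T \sim \mathrm{Beta}(k, n-k)$. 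Under Assumption~\ref{ass:f}, a Taylor expansion gives $p_x(r) = c_0 r^d f(x) + O(r^{d+1})$ uniformly in $x \in C$ (with convexity of $C$ handling boundary effects), and inversion produces $\sqrt{V_{(k);1}} = \sqrt{T/f(x)} + O(T^{1/2+1/d})$ on the event $X_1 = x$.

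For the bias, the Beta-moment formula $\E[T^s] = \Gamma(n)\Gamma(k+s)/(\Gamma(k)\Gamma(n+s))$ together with Stirling's expansion $\Gamma(n+s)/\Gamma(n) = n^s(1 + O(n^{-1}))$ yields $\sqrt{n}\,\E[\sqrt{T}] = c_k^{-1} + O(n^{-1})$ and $\sqrt{n}\,\E[T^{1/2+1/d}] = O(n^{-1/d})$. Consequently $\sqrt{n}\,\E[\sqrt{V_{(k);1}} \mid X_1 = x] = c_k^{-1}/\sqrt{f(x)} + O(n^{-1/d})$, and multiplying by $c_k$ and integrating against $h(x) f(x)$ delivers $\E[\hat\theta] = \int h\sqrt{f} + O(n^{-1/d})$, which is the first claim. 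The same machinery applied to $\E[V_{(k);1}\mid X_1=x] = k/(nf(x)) + O(k^{1+1/d}/n^{1+1/d})$ gives $\E[h^2(X_1) V_{(k);1}] = O(k/n)$, using $\int h^2 = 1$ by orthonormality; this handles the diagonal of the variance.

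I would compute the variance from $\var(\hat\theta) = c_k^2 \bigl[\var(Z_1) + (n-1)\cov(Z_1, Z_2)\bigr]$ with $Z_i = h(X_i)\sqrt{V_{(k);i}}$. The diagonal $\var(Z_1)$ is $O(k/n)$ by the preceding step. The real difficulty is the covariance: a crude Cauchy--Schwarz bound only gives $|\cov(Z_1, Z_2)| = O(k/n)$, hence a non-vanishing $(n-1)\cov(Z_1, Z_2) = O(k)$, so one must exploit the cancellation between $\E[Z_1 Z_2]$ and $(\E[Z_1])^2$. My approach would be to condition on $(X_1, X_2)$, derive the joint law of $(p_{X_1}(R_{(k);1}), p_{X_2}(R_{(k);2}))$ -- approximately a product of two independent $\mathrm{Beta}(k, n-k)$ marginals, with a correction of order $n^{-1}$ on the event that the two $k$-nearest-neighbour balls overlap -- and expand $\E[Z_1 Z_2]$ to sufficient order to show $\cov(Z_1, Z_2) = O(k^3/n^2)$. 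The $k^3$ factor would arise from Cauchy--Schwarz bounds on joint moments such as $\E[T_1^{1/2} T_2^{1/2+1/d}]$ and from the Beta variance of $\sqrt{T}$, which is of order $1/n$.

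Consistency then follows immediately from the MSE decomposition: under $k^{3/2} c_k = o(n^{1/2})$, both $\var(\hat\theta) = O(k^3 c_k^2/n)$ and the squared bias $O(n^{-2/d})$ vanish. The main obstacle is unquestionably the covariance bound; unlike the diagonal term it is not reducible to a one-dimensional Beta-moment identity, and it demands careful control of the joint behavior of two distinct $k$-nearest-neighbour balls, especially on the non-negligible event where they overlap.
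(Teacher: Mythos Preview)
Your bias argument is essentially the content of the paper's Lemma~\ref{lem:mainlemma}: the Beta law of $p_x(R_{(k);1})$ given $X_1=x$, together with the local expansion of $p_x(r)$ and Beta moment identities, is precisely how the paper proves that $\E(\phi(X_i)R_{(k);i}^{ad}) = n^{-a}\frac{\Gamma(k+a)}{\Gamma(k)}c_0^{-a}(\int \phi f^{1-a} + O(n^{-1/d}))$, and then applies it with $a=1/2$. One caveat: your claim that ``convexity of $C$ handl[es] boundary effects'' glosses over a genuine step. For $x$ within $O(n^{-1/d})$ of $\partial C$ the expansion $p_x(r)=c_0 r^d f(x)+O(r^{d+1})$ fails, and the paper deals with this by splitting $C$ into its $\delta_n$-interior and $\delta_n$-belt and bounding the belt contribution separately using property~\textbf{C2}.

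For the variance, your approach diverges from the paper's. You propose a first-principles attack on $\cov(Z_1,Z_2)$ via the joint conditional law of the two nearest-neighbour balls, and you are right that this is the hard part and that a bound $\cov(Z_1,Z_2)=O(k^3/n^2)$ is exactly what is needed. The paper, however, bypasses this entirely by invoking Lemma~4.6(ii) of Evans (2008), which supplies a ready-made inequality $\var(\sum_i h_{i,n}(\Xs)) \leq 2(n+1)(3+8k^2 d c_0)\,\E(h_{i,n}^2)$ valid for generic $k$-nearest-neighbour statistics. With $\E(h_{i,n}^2)=O(k/n)$ from Lemma~\ref{lem:mainlemma} (the case $a=1$, which you also compute), the bound $k^3(\Gamma(k)/\Gamma(k+1/2))^2 O(n^{-1})$ drops out in one line. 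Your route would ultimately yield the same rate, but it requires redoing a delicate overlap analysis that is already packaged in the Evans lemma; the paper's shortcut is what makes the proof short.
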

\begin{proof}
The proof makes use of an extension of Theorem 5.4 in \cite{Evans02}, and is given in Appendix.
\end{proof}
The condition $k^{3/2}\frac{\Gamma(k)}{\Gamma(k+1/2)} = o(n^{1/2})$ is obviously satisfied if $k$ keeps a fixed value. It also allows $k$ to grow along with $n$. As $k \to \infty$, $\Gamma(k)/\Gamma(k+1/2) \sim k^{-1/2}$ and the condition is equivalent to $k = o(n^{1/2})$. It appears that the (first order) asymptotic bias of $\hat{\alpha}_{j,z}$ and $\hat{\beta}^{(q)}_{j,z}$ does not depend on $k$, while their (first order) asymptotic variance increases with it. This can be attributed to larger covariances among the $V_{(k);i}$'s as $k$ gets large, and suggests -- at least at this level -- to keep $k$ as small as possible, that is, to use $k = 1$ always. By contrast, consistency of nonparametric density estimators built on $k$-Nearest-Neighbours ideas usually requires $k \to \infty$ as $n \to \infty$ \citep{Mack79,Hall83}. The fact that it seems here advantageous to keep $k$ as small as possible is, therefore, noteworthy. Below, the results are presented both for $k\doteq k_n$ satisfying $k^{3/2}\frac{\Gamma(k)}{\Gamma(k+1/2)} = o(n^{1/2})$ and for $k = 1$.

\section{Asymptotic properties of the estimators of $\sqrt{f}$ and $f$} \label{sec:gJ}

\subsection{Pointwise consistency}

In this subsection, the estimator $\hat{g}_J(x)$ (\ref{eqn:hatgJ})-(\ref{eqn:hatgJ2}) is first shown to be pointwise consistent for $\sqrt{f}(x)$ at all $x$. This essentially follows from the results of Section \ref{sec:coeff} through the theory of approximating kernels, see \cite{Bochner55} for early developments, and \cite{Meyer92} and \cite{Hardle98} for the wavelet case. From the father wavelet $\varphi$, let the {\it approximating kernel} $K: \R^d \times \R^d \to \R$ be
\begin{equation} K(x,y) = \sum_{z \in \Z^d} \varphi(x-z)\varphi(y-z) \label{eqn:kern} \end{equation}
and its {\it refinement} at resolution $j \in \N$ be
\[K_j(x,y) =  \sum_{z \in \Z^d} 2^{dj}\varphi(2^j x-z)\varphi(2^j y-z) = \sum_{z \in \Z^d} \varphi_{j,z}(x)\varphi_{j,z}(y). \]
Define the two associated operators:
\[K\phi (x) = \int_{\R^d} K(x,y) \phi(y)\diff y \qquad \text{ and } \qquad K_j\phi (x) = \int_{\R^d} K_j(x,y) \phi(y)\diff y, \]
for all functions $\phi \in L_2(\R^d)$. Then we have the following result.

\begin{proposition} \label{prop:gJbias} Under Assumptions \ref{ass:sample}-\ref{ass:f}, the estimator (\ref{eqn:hatgJ})-(\ref{eqn:hatgJ2}) is such that, at all $x \in C$,
\begin{align*} & (i) \qquad \E\left( \hat{g}_J(x)\right) = K_{J+1} \sqrt{f}(x)  + O(n^{-1/d}), \\
& (ii) \qquad \left(\frac{\Gamma(k+1/2)}{\Gamma(k)}\right)^2\,\frac{n}{k^{3}}\,\var\left(\hat{g}_J(x) \right) \leq \kappa  \int_{\R^d} K^2_{J+1}(x,y)\diff y +O(n^{-1/d}), \end{align*}
for some constant $\kappa < \infty$, as $n \to\infty$. Moreover, the order of the remainder terms holds uniformly in $x \in C$.
\end{proposition}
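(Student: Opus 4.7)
The plan is to exploit the multiresolution identity (\ref{eqn:hatgJ2}) together with the definition of the approximating kernel $K_{J+1}$ so as to rewrite $\hat{g}_J$ as a single sample average. Indeed, inserting (\ref{eqn:coefficients1}) into (\ref{eqn:hatgJ2}) and swapping the order of summation gives
\begin{equation*}
\hat{g}_J(x) \;=\; \frac{\Gamma(k)}{\Gamma(k+1/2)}\,\frac{1}{\sqrt n}\sum_{i=1}^n K_{J+1}(x,X_i)\sqrt{V_{(k);i}}.
\end{equation*}
Writing $c_k = \Gamma(k)/\Gamma(k+1/2)$ and $T_i=K_{J+1}(x,X_i)\sqrt{V_{(k);i}}$, the bias and variance of $\hat{g}_J(x)$ are then the mean and variance of an average of the $T_i$'s, to be computed by conditioning on the $X_i$'s and invoking the same extension of Theorem 5.4 of Evans (2002) that drives Proposition \ref{prop:coefconsist}.

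For part (i), conditioning on $X_1=y$ gives, uniformly on $y\in C$ (thanks to Assumption \ref{ass:f}),
\begin{equation*}
\E\bigl(\sqrt{V_{(k);1}}\,\big|\,X_1=y\bigr) \;=\; \frac{1}{c_k\sqrt{n f(y)}} + O\bigl(n^{-1/2-1/d}\bigr),
\end{equation*}
since $nV_{(k);1}\mid X_1=y$ converges to a Gamma$(k,f(y))$ law, whose $1/2$-moment is $\Gamma(k+1/2)/(\Gamma(k)\sqrt{f(y)})$. Multiplying by $c_k\sqrt n$ and integrating against $f$ yields $\int K_{J+1}(x,y)\sqrt{f}(y)\diff y = K_{J+1}\sqrt{f}(x)$ as main term, and an error integrating $K_{J+1}(x,\cdot) f$; by Assumption \ref{ass:f} and standard properties of the approximating kernel (Assumption \ref{ass:wav}), $\int|K_{J+1}(x,y)|f(y)\diff y$ is bounded uniformly in $x\in C$, giving the stated uniform $O(n^{-1/d})$ remainder.

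For part (ii), decompose $\var(\hat{g}_J(x)) = c_k^2\,\var(T_1) + c_k^2 (n-1)\cov(T_1,T_2)$. The diagonal contribution is obtained from $\E(V_{(k);1}\mid X_1=y) = k/(nf(y)) + O(n^{-1-1/d})$, which after integration produces
\begin{equation*}
c_k^2\,\var(T_1) \;=\; \frac{c_k^2 k}{n}\int_{\Rd} K^2_{J+1}(x,y)\diff y \;+\; O\!\left(\frac{c_k^2}{n}\,n^{-1/d}\right) \;-\; \frac{(\E T_1)^2 c_k^2}{1},
\end{equation*}
the centering piece being of smaller order. The principal obstacle, and the source of the $k^3$ factor on the right-hand side of (ii), is the off-diagonal covariance. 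Because $V_{(k);1}$ and $V_{(k);2}$ share the same underlying sample they are not independent; a joint-moment extension of the Evans result (the very same one used behind Proposition \ref{prop:coefconsist}) must be invoked to show that, uniformly in $(y_1,y_2)\in C\times C$,
\begin{equation*}
\cov\bigl(\sqrt{V_{(k);1}},\sqrt{V_{(k);2}}\,\big|\,X_1=y_1,X_2=y_2\bigr) \;=\; O\!\left(\frac{k^3}{n^2}\right),
\end{equation*}
so that $c_k^2(n-1)\cov(T_1,T_2) = O\!\left(c_k^2 k^3/n\right)\,\int K^2_{J+1}(x,y)\diff y$ after a Cauchy--Schwarz control by the diagonal variance structure. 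Combining the two contributions and absorbing constants into $\kappa$ yields the claimed inequality.

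Uniformity in $x\in C$ is inherited throughout: the conditional moment expansions of $V_{(k);1}$ and $(V_{(k);1},V_{(k);2})$ are uniform in the conditioning arguments (since $f$ is bounded away from $0$ and $\infty$ on the compact, convex support $C$ by Assumption \ref{ass:f}), and the kernel integrals $\int|K_{J+1}(x,y)|f(y)\diff y$ and $\int K^2_{J+1}(x,y)\diff y$ are uniformly controlled for $x\in C$ by the compact support and boundedness of $\varphi$ (Assumption \ref{ass:wav}). The hardest technical step is thus establishing the uniform $O(k^3/n^2)$ covariance bound, which is where a genuine strengthening of Evans' Theorem 5.4 to pairs of $k$-nearest-neighbour volumes is required.
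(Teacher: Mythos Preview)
Your rewriting of $\hat g_J(x)$ as $\frac{c_k}{\sqrt n}\sum_i K_{J+1}(x,X_i)\sqrt{V_{(k);i}}$ is exactly how the paper proceeds, and part (i) is essentially the same argument (the paper packages the conditional-moment computation as its Lemma~\ref{lem:mainlemma} applied with $\phi=K_{J+1}(x,\cdot)$ and $a=1/2$). One small caveat: your asserted expansion $\E(\sqrt{V_{(k);1}}\mid X_1=y)=\frac{1}{c_k\sqrt{nf(y)}}+O(n^{-1/2-1/d})$ is \emph{not} uniform in $y\in C$ --- it fails for $y$ within distance $n^{-1/d}$ of $\partial C$, where the nearest-neighbour ball is truncated. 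The paper (and Lemma~\ref{lem:mainlemma}) recovers the uniform-in-$x$ statement only after integrating over $y$ and using that the boundary belt has Lebesgue measure $O(n^{-1/d})$.

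The real gap is in part (ii). Your decomposition $\var(\hat g_J(x))=c_k^2\var(T_1)+c_k^2(n-1)\cov(T_1,T_2)$ is correct, but the claimed bound
\[
\cov\bigl(\sqrt{V_{(k);1}},\sqrt{V_{(k);2}}\,\big|\,X_1=y_1,X_2=y_2\bigr)=O\!\left(\frac{k^3}{n^2}\right)\quad\text{uniformly in }(y_1,y_2)
\]
is false. When $y_1$ and $y_2$ are close enough that each lies inside the other's $k$-nearest-neighbour ball, the two volumes are strongly dependent; in the extreme $y_1=y_2$ the conditional covariance equals $\var(\sqrt{V_{(k);1}}\mid X_1=y_1)$, which is of order $1/n$, not $k^3/n^2$. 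A Cauchy--Schwarz step cannot rescue this either: it only gives $|\cov(T_1,T_2)|\le\var(T_1)=O(n^{-1})$, so $(n-1)\cov(T_1,T_2)$ is $O(1)$, not $O(n^{-1})$. What is true is that the \emph{set} of pairs $(y_1,y_2)$ with large conditional covariance has small probability (roughly $O(k/n)$), and exploiting this requires a combinatorial argument rather than a pointwise bound. The paper bypasses the whole issue by applying Lemma~4.6(ii) of Evans (2008) directly to $S_n=\sum_i K_{J+1}(x,X_i)R_{(k);i}^{d/2}$, obtaining $\var(S_n)\le 2(n+1)(3+8k^2dc_0)\,\E(h_{i,n}^2)$ in one stroke; then Lemma~\ref{lem:mainlemma} with $a=1$, $\phi=K_{J+1}^2(x,\cdot)$ gives $\E(h_{i,n}^2)=\frac{k}{c_0 n}\bigl(\int K_{J+1}^2(x,y)\diff y+O(n^{-1/d})\bigr)$, and (ii) follows. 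This is also where the factor $k^3$ in the statement comes from (namely $k^2\times k$), which your covariance bound invokes but does not derive.
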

\begin{proof} See Appendix. \end{proof}

This result obviously implies the pointwise consistency of $\hat{g}_J(x)$ for $\sqrt{f}(x)$ at any fixed $x \in C$ provided that $k^{3/2}\frac{\Gamma(k)}{\Gamma(k+1/2)} = o(n^{1/2})$, in particular if $k$ is kept fixed.

\subsection{Uniform $L_2$-consistency}

Consistency in Mean Integrated Squared Error ($L_2$-consistency) of estimator (\ref{eqn:hatgJ})-(\ref{eqn:hatgJ2}) can now be established uniformly over large classes of functions, such as Sobolev classes. Call $W^{m,p}(\Omega)$ the Sobolev space of functions defined on $\Omega \subset \R^d$ for which all mixed partial derivatives up to order $m \geq 0$ exist (in the weak sense) and belong to $L_p(\Omega)$, $1 \leq p \leq \infty$. Formally,
\begin{equation*} W^{m,p}(\Omega) = \left \{ \phi \in L^p(\Omega) : D^{\alpha} \phi \in L^p(\Omega) \,\, \forall \alpha \in \N^d: |\alpha| \leqslant m \right \},  \end{equation*}
where $D^\alpha$ is the $\alpha^\text{th}$ (multi-index notation) partial weak derivative operator, and $|\alpha| = \sum_{k=1}^d \alpha_k$. A norm on $W^{m,p}(\Omega)$ is classically defined as $\|\phi\|_{m,p} = \sum_{|\alpha|\leq m} \|D^\alpha \phi \|_p$ \citep{Triebel1992}.

\ppn It follows from Assumption \ref{ass:f} that there exists an integer $m \geq 1$ such that $f \in W^{m,2}(C)$: $f$ has uniformly bounded partial derivatives on $C$, which implies $f \in W^{1,\infty}(C)$, and as $W^{1,\infty}(C) \subset W^{1,2}(C)$, at least $m=1$. Of course, more regular (i.e.\ smoother) densities $f$ allow for a higher value of $m$. In addition, under Assumption \ref{ass:f}, $\sqrt{f} \in W^{m,2}(C)$ as well. This appears clearly from the multivariate version of Fa\`a di Bruno's formula (see e.g.\ \cite{Hardy06}), which reads here, for all $\alpha \in \N^d$ such that $|\alpha|\leq m$:
\[D^\alpha\sqrt{f} = \sum_{\xi\in\Xi} f^{1/2 - \left|\xi\right|} \prod_{\beta\in\xi}D^\beta f,\]
where $\Xi$ is the set of all partitions $\xi$ of the elements of $\alpha$ and the product is over all `blocks' $\beta$ of the partition $\xi$. Then the $L^2$-norm of the second factor in each term is bounded because $|\beta| \leq m$ and $f \in W^{m,2}(C)$, and the first factor $f^{1/2 - \left|\xi\right|}$ is uniformly bounded for all $0 \le \left|\xi\right| \le m$, because $f$ is both bounded from above (case $\left|\xi\right|=0$) and bounded away from $0$ (case $\left|\xi\right| \ge 1$). This also implies that, if $f \in B^{m,2}(L)= \{ \phi \in W^{m,2}(C): \|\phi\|_{m,2} \leq L\}$ for some constant $ 0 \leq L <\infty$, i.e., a ball of radius $L$ in $W^{m,2}(C)$, then $\sqrt{f} \in B^{m,2}(L')$ for some other constant $0 \leq L' < \infty$.

\ppn Now, suppose that the father wavelet $\varphi$ introduced in Assumption \ref{ass:wav} is such that the induced kernel (\ref{eqn:kern}) satisfies the following assumption.

\begin{assumption} \label{ass:kern} The kernel $K$ (\ref{eqn:kern}) is such that $|K(x,y)| \leq F(x-y)$, for some square integrable function $F: \R^d \to \R$ with $\int_{\R^d} |x|^{\nu}F(x)\diff x < \infty$ for all $\nu \in \N^d$ such that $|\nu| = m$. Moreover, for all $x\in \R^d$, $\int_{\R^d} (y-x)^{\nu'} K(x,y)\diff y = \delta_{0,\nu'}$, for all $\nu' \in \N^d$ such that $|\nu'| \leq m-1$.
\end{assumption}

Here, for $x \in \R^d$ and $\nu \in \N^d$, $|x|^\nu = \prod_{k=1}^d |x_k|^{\nu_k}$, and $\delta_{\nu,\nu'}$ is the $d$-fold Kronecker delta, equal to 1 if $\nu_k = \nu'_k$ $\forall k\in \{1,\ldots,d\}$ and 0 otherwise. Then, one can prove the following.

\begin{theorem} \label{thm:uniconst} Under Assumptions \ref{ass:sample}-\ref{ass:f} and Assumption \ref{ass:kern}, the estimator (\ref{eqn:hatgJ})-(\ref{eqn:hatgJ2}) is such that
\begin{equation} \sup_{f \in B^{m,2}(L)} \E\left(\|\hat{g}_J - \sqrt{f} \|_2^2\right)  \leq   \kappa_1 2^{-2Jm} + \kappa_2 n^{-2/d} + \kappa'_3 n^{-1} k^3 \left(\frac{\Gamma(k)}{\Gamma(k+1/2)} \right)^2 2^{dJ}, \label{eqn:uniL2}\end{equation}
for some constants $\kappa_1,\kappa_2,\kappa'_3 < \infty$ and $n$ large enough.
\end{theorem}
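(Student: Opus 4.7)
The plan is to use the standard bias-variance decomposition of the MISE:
\[\E(\|\hat{g}_J-\sqrt{f}\|_2^2) = \int_C \bigl(\E\hat{g}_J(x)-\sqrt{f}(x)\bigr)^2\diff x + \int_C \var(\hat{g}_J(x))\diff x,\]
and bound each piece using the two parts of Proposition \ref{prop:gJbias} together with classical wavelet-kernel approximation arguments. The three terms on the right-hand side of (\ref{eqn:uniL2}) correspond respectively to the genuine wavelet approximation bias, the extra bias introduced by the nearest-neighbour estimation of the coefficients, and the integrated variance.

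For the squared bias, I would use Proposition \ref{prop:gJbias}(i) to split
\[\E\hat{g}_J(x)-\sqrt{f}(x) = \bigl(K_{J+1}\sqrt{f}(x)-\sqrt{f}(x)\bigr) + O(n^{-1/d}),\]
where the remainder is uniform in $x\in C$. Squaring, using $(a+b)^2\leq 2a^2+2b^2$, and integrating over the compact set $C$ yields the $\kappa_2 n^{-2/d}$ term and $2\|K_{J+1}\sqrt{f}-\sqrt{f}\|_2^2$. For the latter, the plan is to invoke the standard wavelet kernel approximation inequality in Sobolev spaces: under Assumption \ref{ass:kern} (the moment bound on $F$ together with the polynomial reproduction up to order $m-1$), a Taylor expansion of $\sqrt{f}$ and the orthogonality/moment conditions give
\[\|K_{J+1}\sqrt{f}-\sqrt{f}\|_2 \leq c\, 2^{-Jm}\|\sqrt{f}\|_{m,2}\]
(see e.g.\ \cite{Hardle98}). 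The uniformity step is then the Faà di Bruno argument already spelled out in the text, which shows that $f\in B^{m,2}(L)$ implies $\sqrt{f}\in B^{m,2}(L')$ for some $L'=L'(L,a_1,a_2,m)$ depending only on the class constants, delivering the $\kappa_1 2^{-2Jm}$ term uniformly over $B^{m,2}(L)$.

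For the integrated variance I would directly integrate the pointwise bound in Proposition \ref{prop:gJbias}(ii) over $C$. The crucial wavelet identity is
\[\int_{\R^d} K^2_{J+1}(x,y)\diff y = K_{J+1}(x,x) = \sum_{z\in\Zd} 2^{d(J+1)}\varphi^2(2^{J+1}x-z),\]
which follows from the orthonormality of $\{\varphi_{J+1,z}\}_{z\in\Zd}$. Since $\varphi$ is bounded with compact support (Assumption \ref{ass:wav}), $K_{J+1}(x,x) = O(2^{dJ})$ uniformly in $x$, hence $\int_C K_{J+1}(x,x)\diff x = O(2^{dJ})$ by compactness of $C$. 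Combining this with the $k^3(\Gamma(k)/\Gamma(k+1/2))^2 n^{-1}$ prefactor of Proposition \ref{prop:gJbias}(ii) produces the last term in (\ref{eqn:uniL2}); the $O(n^{-1/d})$ remainder there is absorbed, for $n$ large enough, into the other two terms (or inflates $\kappa_2$).

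The main obstacle is ensuring that all the constants remain uniform over the Sobolev ball $B^{m,2}(L)$. The wavelet approximation estimate is standard but its constant depends only on $\|\sqrt{f}\|_{m,2}$, so the uniform control hinges on the Faà di Bruno bound mapping $B^{m,2}(L)$ into $B^{m,2}(L')$; here the fact that $f$ is bounded above and away from $0$ (Assumption \ref{ass:f}) is essential, since the factor $f^{1/2-|\xi|}$ must be uniformly controlled. Similarly, one should check that the $O(n^{-1/d})$ remainders in Proposition \ref{prop:gJbias} and the constant $\kappa$ there depend on $f$ only through $a_1,a_2$ and the bound on the first-order partial derivatives of $f$, so that they are uniform over $B^{m,2}(L)$ once we restrict to densities satisfying Assumption \ref{ass:f} with common constants. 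Given these uniform inputs, assembling the three bounds gives (\ref{eqn:uniL2}).
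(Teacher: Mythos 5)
Your proposal is correct and follows essentially the same route as the paper: the bias--variance MISE decomposition, Proposition \ref{prop:gJbias}$(i)$ plus the H\"ardle et al.\ Sobolev approximation bound (with the Fa\`a di Bruno step giving $\sqrt{f}\in B^{m,2}(L')$ uniformly) for the $2^{-2Jm}$ and $n^{-2/d}$ terms, and integration of Proposition \ref{prop:gJbias}$(ii)$ over $C$ for the variance term. The only local difference is that you bound $\int_{\R^d}K^2_{J+1}(x,y)\diff y$ via the orthonormality identity $\int_{\R^d}K^2_{J+1}(x,y)\diff y=K_{J+1}(x,x)=O(2^{dJ})$ (using only Assumption \ref{ass:wav}), whereas the paper uses the domination $|K(x,y)|\le F(x-y)$ from Assumption \ref{ass:kern}; both are valid and yield the same $O(2^{dJ})$ factor.
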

\begin{proof} See Appendix. \end{proof}

\ppn Clearly, the bound in the right-hand side of (\ref{eqn:uniL2}) is a non-decreasing function of $k$, which suggests to take $k=1$ as it was already noted below Proposition \ref{prop:coefconsist}. For that choice, we have directly:
\begin{corollary} Under Assumptions \ref{ass:sample}-\ref{ass:f} and Assumption \ref{ass:kern}, the estimator (\ref{eqn:hatgJ})-(\ref{eqn:hatgJ2}) with $k=1$ in (\ref{eqn:coefficients1})-(\ref{eqn:coefficients2}) is such that
 \begin{equation*} \sup_{f \in B^{m,2}(L)} \E\left(\|\hat{g}_J - \sqrt{f} \|_2^2\right) \leq   \kappa_1 2^{-2Jm} + \kappa_2 n^{-2/d} + \kappa_3 \frac{2^{dJ}}{n}, \end{equation*}
for some constants $\kappa_1,\kappa_2,\kappa_3 < \infty$ and $n$ large enough.
\end{corollary}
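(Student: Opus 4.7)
The plan is to obtain the corollary as an immediate specialisation of Theorem \ref{thm:uniconst} to the particular choice $k=1$, so the work reduces to verifying admissibility and collapsing the $k$-dependent factor into a numerical constant.

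First, I would check that $k=1$ is admissible under the hypotheses of Theorem \ref{thm:uniconst}. Since the requirement inherited from Proposition \ref{prop:coefconsist} is $k^{3/2}\Gamma(k)/\Gamma(k+1/2) = o(n^{1/2})$, and the left-hand side for $k=1$ is just the fixed constant $2/\sqrt{\pi}$, the condition holds trivially as $n\to\infty$. Hence the conclusion of Theorem \ref{thm:uniconst} applies with $k=1$, yielding the bound (\ref{eqn:uniL2}) with the first two terms already in the desired form.

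The only remaining step is to evaluate the $k$-dependent prefactor in the third term of (\ref{eqn:uniL2}). Using $\Gamma(1)=1$ and $\Gamma(3/2)=\sqrt{\pi}/2$,
\[
k^3\left(\frac{\Gamma(k)}{\Gamma(k+1/2)}\right)^2 \Bigg|_{k=1} = 1\cdot\left(\frac{2}{\sqrt{\pi}}\right)^2 = \frac{4}{\pi},
\]
so setting $\kappa_3 \doteq \tfrac{4}{\pi}\,\kappa'_3 < \infty$ and keeping $\kappa_1,\kappa_2$ as in Theorem \ref{thm:uniconst} produces exactly the stated inequality. No genuine obstacle is anticipated: the corollary is merely a cosmetic repackaging of Theorem \ref{thm:uniconst} at the value of $k$ singled out in the discussion following the theorem as the one that minimises the variance contribution to the risk bound.
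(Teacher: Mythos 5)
Your proposal is correct and matches the paper's own treatment: the corollary is stated there as following ``directly'' from Theorem \ref{thm:uniconst} by setting $k=1$, and your computation $k^3\left(\Gamma(k)/\Gamma(k+1/2)\right)^2\big|_{k=1}=4/\pi$, absorbed into $\kappa_3$, is exactly the implicit step. The admissibility check is harmless but not needed, since Theorem \ref{thm:uniconst} holds for any fixed $k$ without the $o(n^{1/2})$ condition.
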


\ppn The terms depending on $J$ are balanced for $2^J \propto n^{\frac{1}{2m + d }}$, in which case 
\[\sup_{f \in B^{m,2}(L)} \E\left(\|\hat{g}_J - \sqrt{f} \|_2^2\right)  \leq   \kappa' n^{-\frac{2m}{2m+d}} + \kappa'' n^{-2/d},\]
for two constants $\kappa', \kappa'' < \infty$. Finally, by the Cauchy-Schwartz inequality, 
\[\|\hat{g}_J^2-f\|_2^2 = \|(\hat{g}_J-\sqrt{f})(\hat{g}_J+\sqrt{f})\|_2^2 \leq \|\hat{g}_J-\sqrt{f}\|_2^2 \times \|\hat{g}_J+\sqrt{f}\|_2^2.\]
Assumptions \ref{ass:wav} and \ref{ass:f} ensure that the second factor is bounded, whereby we have the following result about $\hat{g}^2_J$ as an estimator of the density $f$.
\begin{theorem} Under Assumptions \ref{ass:sample}-\ref{ass:f} and Assumption \ref{ass:kern}, the estimator $\hat{g}_J^2$ with $k=1$ in (\ref{eqn:coefficients1})-(\ref{eqn:coefficients2})  and $2^J \propto n^{\frac{1}{2m + d }}$ is a uniformly $L_2$-consistent estimator of $f$, such that
 \begin{equation} \sup_{f \in B^{m,2}(L)} \E\left(\|\hat{g}^2_J - f \|_2^2\right)  \leq   \kappa' n^{-\frac{2m}{2m+d}} + \kappa'' n^{-2/d}, \label{eqn:L2rate} \end{equation}
for some constants $\kappa', \kappa'' < \infty$.
\end{theorem}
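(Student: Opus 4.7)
The plan is to deduce the $L^2$-convergence rate of $\hat{g}_J^2$ to $f$ from the $L^2$-convergence rate of $\hat{g}_J$ to $\sqrt{f}$ already established in the preceding corollary. The bridge is the pointwise factorisation $\hat{g}_J^2 - f = (\hat{g}_J - \sqrt{f})(\hat{g}_J + \sqrt{f})$, which, combined with H\"older's inequality (a more honest replacement for the paper's Cauchy--Schwarz step), yields
\[\|\hat{g}_J^2-f\|_2^2 \;\leq\; \|\hat{g}_J+\sqrt{f}\|_\infty^2 \cdot \|\hat{g}_J-\sqrt{f}\|_2^2.\]
Taking expectations and pulling out the (hopefully bounded) first factor then reduces the problem to the previous corollary.

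The first step is to argue that $\|\hat{g}_J+\sqrt{f}\|_\infty$ is bounded uniformly in $n$ and $f$. Assumption \ref{ass:f} directly yields $\|\sqrt{f}\|_\infty \leq \sqrt{a_2}$. For the wavelet reconstruction $\hat{g}_J = \sum_z \hat{\alpha}_{J+1,z}\varphi_{J+1,z}$, Assumption \ref{ass:wav} guarantees that at any fixed $x$ only $O(1)$ terms are non-zero (compact support of $\varphi$), each of magnitude $\leq 2^{d(J+1)/2}\|\varphi\|_\infty |\hat{\alpha}_{J+1,z}|$. Together with the fact that $|\alpha_{J+1,z}| \leq \|\varphi_{J+1,z}\|_1 \|\sqrt{f}\|_\infty \lesssim 2^{-d(J+1)/2}$, and with the coefficient control from Proposition \ref{prop:coefconsist}, this should deliver a uniform pointwise bound on $\hat{g}_J$. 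This is the main technical obstacle: the estimator is random and the resolution grows with $n$, so the bound must be justified in an appropriate sense (in expectation, or with probability tending to one after truncation).

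The second step is essentially mechanical. Inserting the corollary's bound
\[\sup_{f \in B^{m,2}(L)} \E\|\hat{g}_J-\sqrt{f}\|_2^2 \;\leq\; \kappa_1 2^{-2Jm} + \kappa_2 n^{-2/d} + \kappa_3 2^{dJ}/n\]
into the previous inequality, the bias-type term $2^{-2Jm}$ and the variance-type term $2^{dJ}/n$ are balanced by choosing $2^J \propto n^{1/(2m+d)}$, which produces the advertised rate $n^{-2m/(2m+d)}$. The residual $n^{-2/d}$, originating from the first-order asymptotic bias of the estimated coefficients (Proposition \ref{prop:coefconsist}), is independent of $J$ and survives unchanged. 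Absorbing the bounded factor $\|\hat{g}_J+\sqrt{f}\|_\infty^2$ into the constants $\kappa', \kappa''$ finishes the argument.

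If the uniform $L^\infty$-bound on $\hat{g}_J$ turns out to be too fragile, a robust alternative is to use $(\hat{g}_J+\sqrt{f})^2 \leq 2(\hat{g}_J-\sqrt{f})^2 + 8f$, reducing the task to bounding $\E\int(\hat{g}_J-\sqrt{f})^4 \diff x$ plus $\|f\|_\infty \cdot \E\|\hat{g}_J-\sqrt{f}\|_2^2$; the first term can be controlled by fourth-moment refinements of the arguments behind Proposition \ref{prop:coefconsist}, preserving the same rate.
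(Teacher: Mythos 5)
Your route is the same as the paper's: factor $\hat g_J^2 - f = (\hat g_J - \sqrt f)(\hat g_J + \sqrt f)$, bound the second factor, and feed the first into the preceding corollary with $2^J \propto n^{1/(2m+d)}$; the exponent balancing in your second step matches the paper exactly. You are also right that the paper's displayed inequality $\|uv\|_2^2 \le \|u\|_2^2\,\|v\|_2^2$ is not what Cauchy--Schwarz gives (that would be $\|uv\|_1 \le \|u\|_2\|v\|_2$), and that the H\"older version $\|uv\|_2^2 \le \|v\|_\infty^2\|u\|_2^2$ is the correct repair.

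The gap is the one you flag yourself: the uniform bound on $\|\hat g_J + \sqrt f\|_\infty$ is not established. Your sketch bounds the \emph{true} coefficients, $|\alpha_{J+1,z}| \lesssim 2^{-d(J+1)/2}$, whereas what is needed is a bound on the \emph{random} $\hat\alpha_{J+1,z}$, and Proposition \ref{prop:coefconsist} controls only their mean and variance. A standard deviation of order $n^{-1/2}$ multiplied by $\|\varphi_{J+1,z}\|_\infty \asymp 2^{d(J+1)/2}$ gives a pointwise fluctuation of $\hat g_J$ of order $2^{dJ/2}n^{-1/2}$, which is $o(1)$ in mean square at each fixed $x$ for the prescribed $J$, but is neither a sure bound nor uniform over the $O(2^{dJ})$ active translates. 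Since $\|\hat g_J + \sqrt f\|_\infty^2$ is random, it cannot simply be pulled out of the expectation: one needs a deterministic truncation of the estimator, a concentration-plus-union-bound argument over $z$, or a decoupling via Cauchy--Schwarz on the expectation (which then demands $\E\|\hat g_J-\sqrt f\|_2^4$). Your fallback $(\hat g_J+\sqrt f)^2 \le 2(\hat g_J - \sqrt f)^2 + 8f$ is the cleanest fix in principle, but it leaves the term $\E\int(\hat g_J-\sqrt f)^4$ unproven, and the fourth-moment analogue of Lemma \ref{lem:mainlemma} is asserted rather than derived. To be fair, the paper's own two-line argument suffers from the same (indeed a worse) defect, so your proposal is a more honest account of what remains to be done; but as written it does not close the argument.
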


\ppn Note that the first term in the right-hand side of (\ref{eqn:L2rate}) is the optimal nonparametric rate of convergence in this situation, as per \cite{Stone82}'s classical results. That term is dominated by the second one only for $d > \frac{2m}{m-1}$. Hence we have the following corollary.
\begin{corollary}
Under Assumptions \ref{ass:sample}-\ref{ass:f} and Assumption \ref{ass:kern}, the estimator $\hat{g}_J^2$ with $k=1$ in (\ref{eqn:coefficients1})-(\ref{eqn:coefficients2}) and $2^J \propto n^{\frac{1}{2m + d }}$ is asymptotically optimal for $f$ uniformly over $B^{m,2}(L) \subset W^{m,2}(C)$, in the sense that 
\[\sup_{f \in B^{m,2}(L)} \E\left(\|\hat{g}^2_J - f \|_2^2\right) \leq \kappa' n^{-\frac{2m}{2m+d}}, \]
for $d \leq \frac{2m}{m-1}$.
\end{corollary}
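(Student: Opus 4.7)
The plan is to derive this corollary as an immediate consequence of the preceding theorem by a simple rate comparison; there is essentially no new analysis to perform, only an algebraic check of when the optimal nonparametric rate $n^{-2m/(2m+d)}$ absorbs the auxiliary term $n^{-2/d}$.

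First I would start from the bound already established in the previous theorem:
\[\sup_{f \in B^{m,2}(L)} \E\left(\|\hat{g}^2_J - f \|_2^2\right)  \leq   \kappa' n^{-\frac{2m}{2m+d}} + \kappa'' n^{-2/d},\]
valid under the stated assumptions and with the choice $2^J \propto n^{1/(2m+d)}$. The goal is to subsume the second term into the first whenever the dimension-smoothness relation allows.

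Next I would compare exponents. The second term is $O(n^{-2m/(2m+d)})$ if and only if $2/d \geq 2m/(2m+d)$, i.e.\ $2(2m+d) \geq 2md$, i.e.\ $4m \geq d(2m-2)$. For $m \geq 2$ this rearranges to $d \leq \tfrac{2m}{m-1}$, which is precisely the condition in the statement (and for $m=1$ the inequality is vacuous since the right-hand side is $+\infty$). Under this condition, I can bound $n^{-2/d} \leq n^{-2m/(2m+d)}$ for all $n \geq 1$, absorb $\kappa''$ into a single constant $\kappa' \mapsto \kappa'+\kappa''$, and obtain the claimed bound.

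The main obstacle, if any, is purely bookkeeping: one must verify that the constants $\kappa', \kappa''$ inherited from the previous theorem do not depend on $f \in B^{m,2}(L)$, so that the supremum is controlled uniformly — but this is already part of the preceding theorem's conclusion. No fresh probabilistic or approximation-theoretic argument is required; the corollary is essentially a statement about when the $k$-nearest-neighbour bias term $O(n^{-1/d})$ becomes negligible compared with the intrinsic wavelet approximation error, which the previous theorem has already quantified.
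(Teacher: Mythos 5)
Your proposal is correct and takes exactly the same route as the paper: the corollary follows from the preceding theorem by comparing the exponents $2/d$ and $2m/(2m+d)$, and your algebra showing that $n^{-2/d} = O\left(n^{-2m/(2m+d)}\right)$ precisely when $d \leq \frac{2m}{m-1}$ (vacuously for $m=1$) is the same observation the paper makes in the remark preceding the corollary. Absorbing $\kappa''$ into the constant then gives the stated bound, with uniformity over $B^{m,2}(L)$ inherited from the theorem.
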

As $\frac{2m}{m-1} > 2$, the estimator is always optimal in one and two dimensions. Under the classical mild smoothness assumption $m=2$, it is optimal for $1 \leq d \leq 4$ -- this probably covers most of the cases of practical interest, given that the optimal rate of convergence itself becomes very poor in higher dimensions ({\it Curse of Dimensionality}, \cite{Geenens11}). In any case, for `rough' densities $f$ ($m=1$), the estimator reaches the optimal rate in all dimensions.

\section{Numerical experiments} \label{sec:numE}

\subsection{Simulation study} 

In this section the practical performance of the shape-preserving estimator $\hat{g}_J^2$ based on (\ref{eqn:hatgJ})-(\ref{eqn:hatgJ2}) is compared to that of the classical wavelet estimator. Three bivariate ($d=2$) Gaussian mixtures were considered: (a) two components, showing two peaks with very different covariance structures (Figure \ref{fig:mult}); (b) two components, showing two similar peaks (Figure \ref{fig:mix2}), and (c) a bivariate version of \cite{Marron1992}'s `smooth comb', showing 4 peaks of decreasing spread (Figure \ref{fig:mix3}).\footnote{The exact expressions are available from the authors upon request.} Those where scaled and truncated to the unit square $[0,1]^2$, in order to satisfy Assumption \ref{ass:f}. Note that mixtures (a) and (c) exhibit peaks of different spread and orientation, features known to cause difficulties in density estimation.

\begin{figure}[H] \label{fig:true_figures}
	\begin{subfigure}[t]{0.33\textwidth}
		\includegraphics[width=\textwidth]{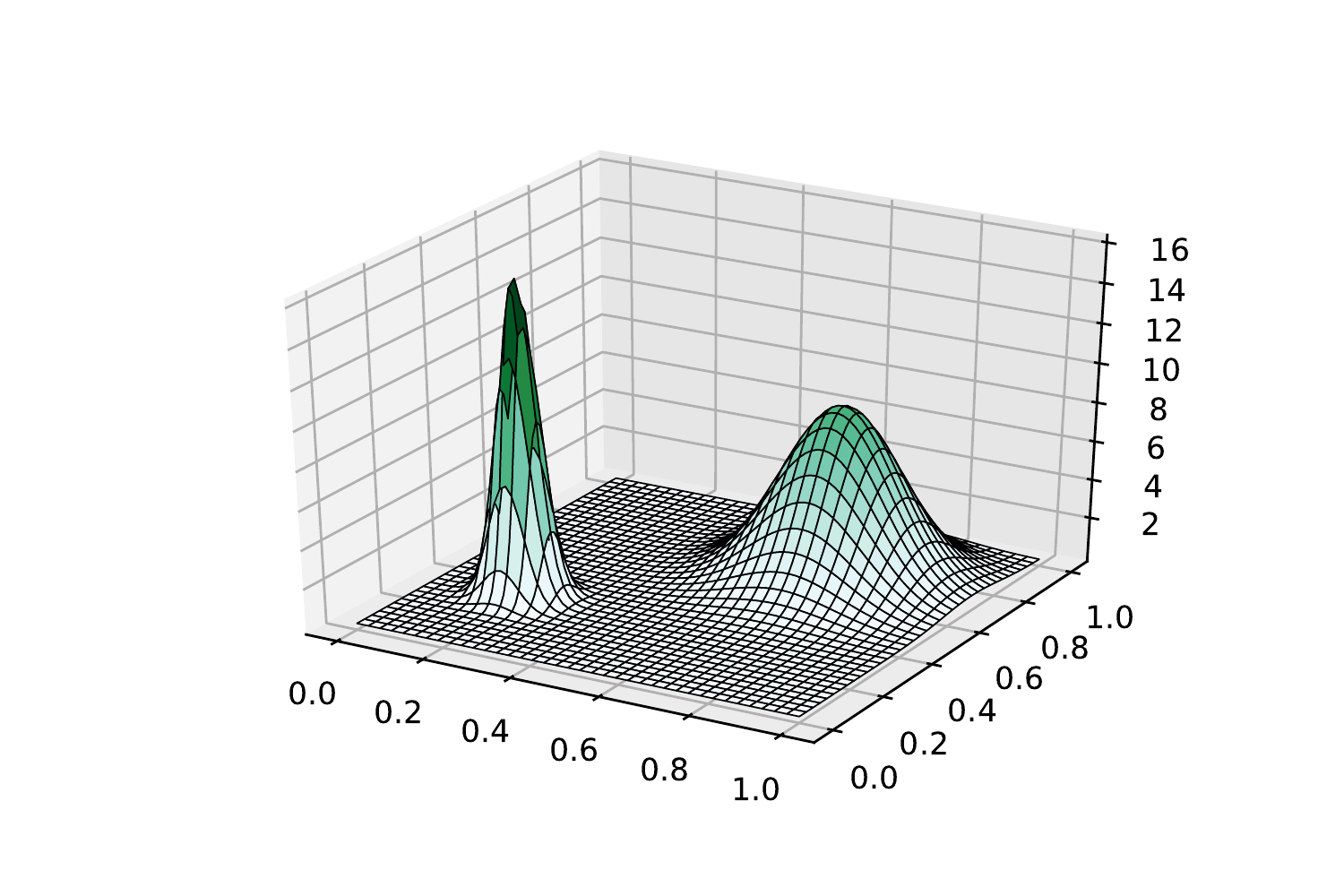}
		\caption{}
		\label{fig:mult}
	\end{subfigure}
	\begin{subfigure}[t]{0.33\textwidth}
		\includegraphics[width=\textwidth]{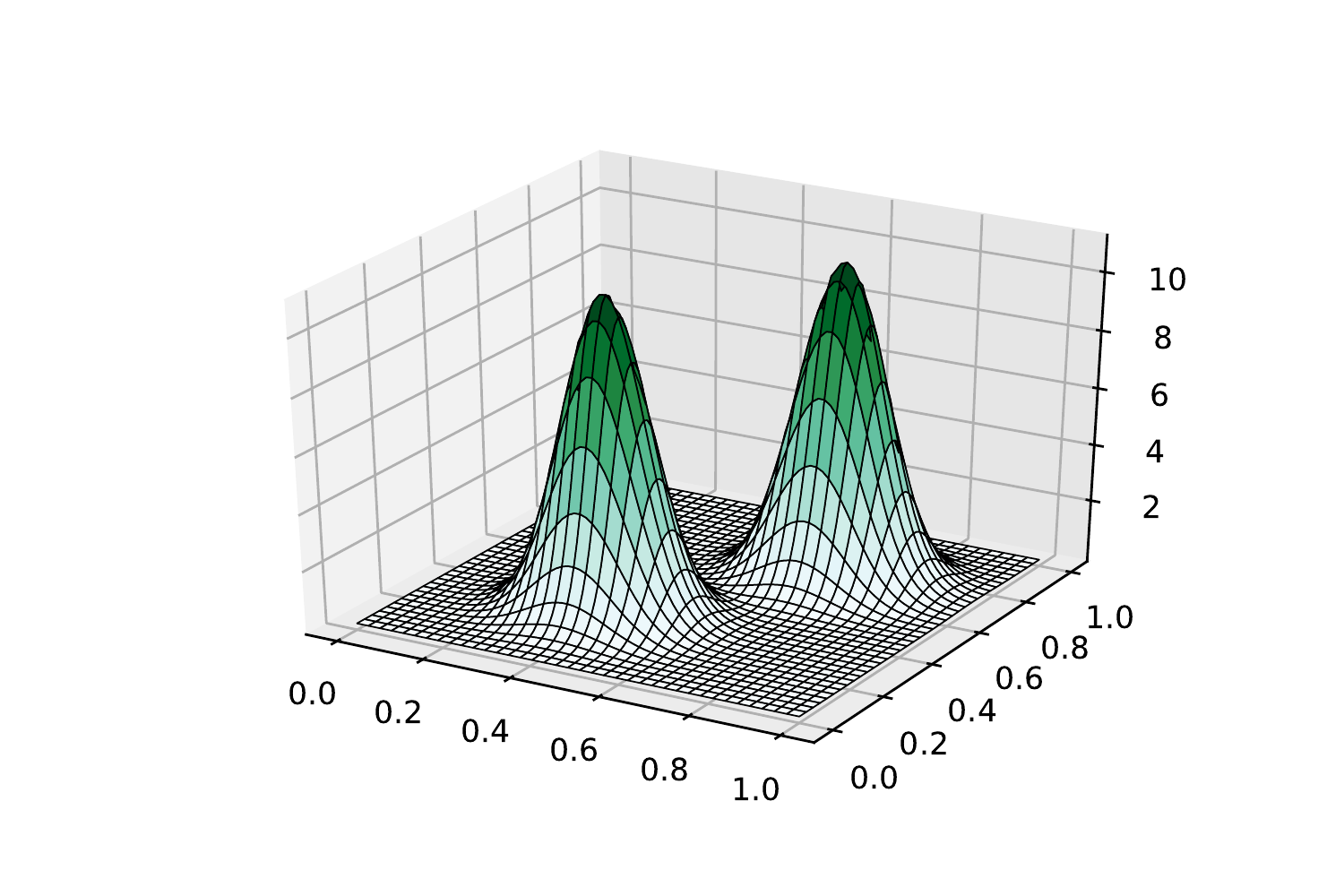}
		\caption{}
		\label{fig:mix2}
	\end{subfigure}
	\begin{subfigure}[t]{0.33\textwidth}
		\includegraphics[width=\textwidth]{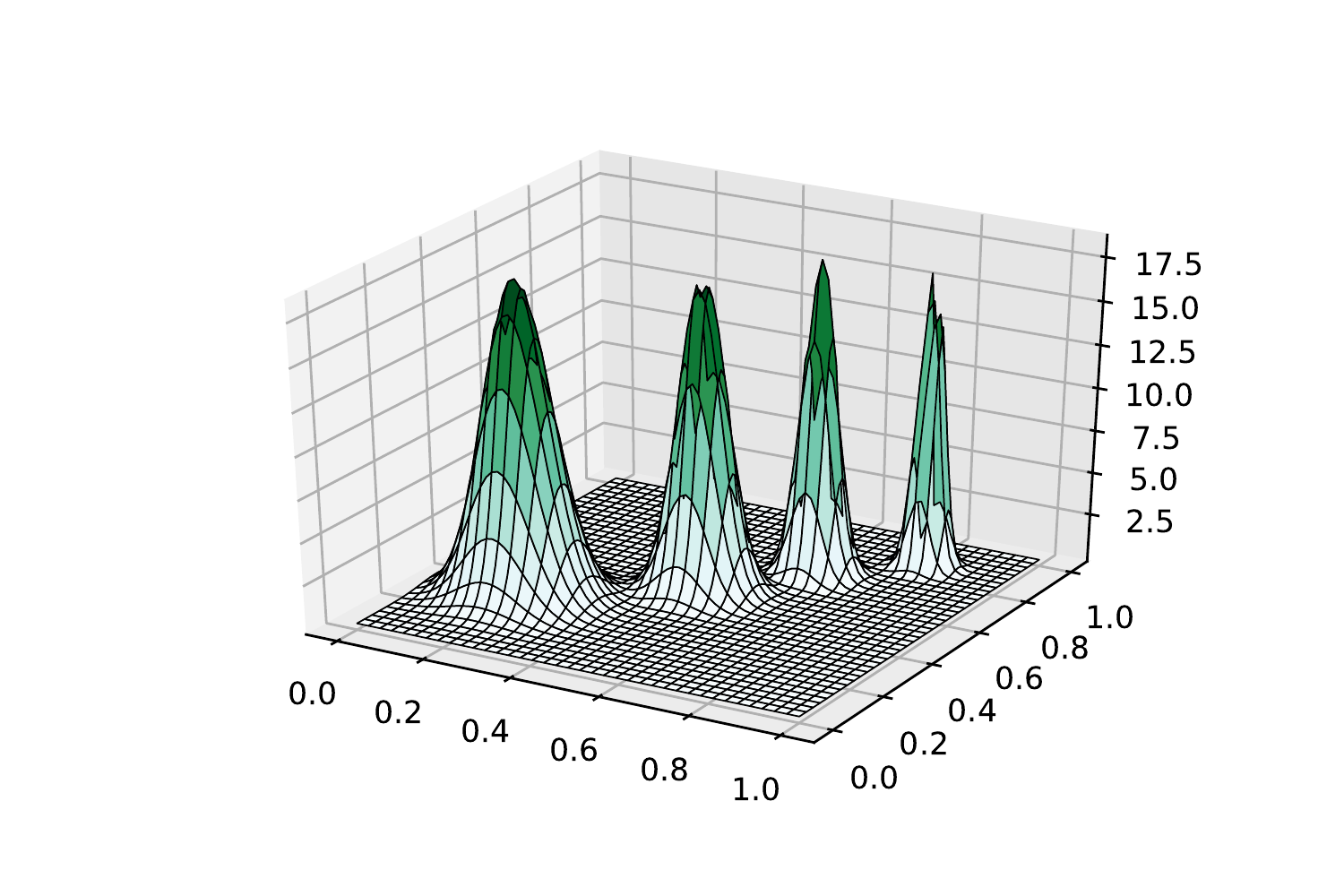}
		\caption{}
		\label{fig:mix3}
	\end{subfigure}
	\caption{Bivariate densities used in the simulation study.}
\end{figure}

\ppn For each density, $M=500$ random samples of size $n=2^\ell$, for $\ell \in \{7,\ldots,12\}$, that is, from $n=128$ up to $n=4096$,\footnote{Sample sizes as powers of 2 are customary in the wavelet framework due to their suitability when resorting to the Fast Wavelet Transform, however the estimator described in Section \ref{subsec:def} remains obviously valid for any arbitrary sample size $n$.} were generated, and our procedure was used on each of them for estimating $f$. Proper normalisation of all estimates was enforced through (\ref{eqn:unitint}). The accuracy of a given estimate $\hat{f}$ was measured by the Integrated Squared Error (ISE) $\int_{[0,1]^2} \left(\hat{f}(x)-f(x) \right)^2\diff x$, approximated by Riemannian summing on a fine regular partition of $[0,1]^2$. The Mean Integrated Squared Error (MISE) of an estimator was then approximated by averaging the ISE's over the $M=500$ Monte-Carlo replications, see Table \ref{tab:mise}. 

\ppn Estimators (\ref{eqn:coefficients1})-(\ref{eqn:coefficients2}) were computed with bivariate wavelets $\varphi_{j,z}$ and $\psi_{j,z}^{(q)}$ obtained by tensor products of univariate Daubechies wavelets with 6 vanishing moments \citep{Daubechies92}. In agreement with the asymptotic results, the value $k = 1$ in (\ref{eqn:coefficients1})-(\ref{eqn:coefficients2}) was given primary focus, but $k = 2, 4, 8, .., \sqrt{n}$ were also tested to investigate the effect of $k$ in finite samples. For the three densities and all sample sizes, the choice $k=1$ always lead to the final estimator with the smallest MISE, or within statistical significance (given $M=500$ Monte-Carlo replications) to the estimator with the smallest MISE. Hence in Table \ref{tab:mise} only the results for $k=1$ are reported. In (\ref{eqn:hatgJ}), the baseline resolution was taken $j_0 = 0$ and the resolution levels $J \in \{-1, 0, 1, 2,3\}$ were considered -- the case $J =  - 1$ is here defined as the estimator with the trend at baseline level $j_0=0$ only. For comparison, the density $f$ was also estimated on each sample by the classical wavelet estimator described in \cite{Hardle98}, whose MISE was approximated in the exact same way as above.

\ppn The whole procedure was developed in Python, using the BallTree $k$-Nearest neighbour algorithm \citep{Omohundro89} and the PyWavelets library that supports a number of orthogonal and biorthogonal wavelet families. It is available as open source in a github repository\footnote{{\tt https://github.com/carlosayam/PyWDE}} along with an implementation of the classic wavelet estimator. Note that, despite only the case $d=2$ is reported here, the estimator can handle potentially any number of dimensions.

\begin{table}[ht!]
	\centering
	\begin{minipage}[t]{0.31\textwidth}
		\centering
		\textbf{Gaussian mix (a)} \\
		\vspace{6pt}
		\begin{tabular}{|r|r|r|r|}
			\hline
			\textbf{$n$} & \textbf{$J+1$} & \textbf{SP} & \textbf{Class.} \\
			\hline
			\multirow{6}{*}{128}
			& 0 & 3.490
			& 3.686 \\
			\cline{2-4}
			& 1 & 2.907
			& 3.097 \\
			\cline{2-4}
			& 2 & 1.358
			& 1.086 \\
			\cline{2-4}
			& \textbf{3} & \textbf{1.199}
			& \textbf{0.862} \\
			\cline{2-4}
			& 4 & 4.697
			& 1.964 \\
			\hline
			\multirow{6}{*}{256}
			& 0 & 3.491
			& 3.686 \\
			\cline{2-4}
			& 1 & 2.891
			& 3.092 \\
			\cline{2-4}
			& 2 & 1.286
			& 1.043 \\
			\cline{2-4}
			& \textbf{3} & \textbf{0.778}
			& \textbf{0.634} \\
			\cline{2-4}
			& 4 & 2.351
			& 0.995 \\
			\hline
			\multirow{6}{*}{512}
			& 0 & 3.491
			& 3.686 \\
			\cline{2-4}
			& 1 & 2.880
			& 3.090 \\
			\cline{2-4}
			& 2 & 1.235
			& 1.022 \\
			\cline{2-4}
			& \textbf{3} & \textbf{0.543}
			& 0.523 \\
			\cline{2-4}
			& 4 & 1.093
			& \textbf{0.518} \\
			\hline
			\multirow{6}{*}{1024}
			& 0 & 3.491
			& 3.686 \\
			\cline{2-4}
			& 1 & 2.873
			& 3.088 \\
			\cline{2-4}
			& 2 & 1.211
			& 1.012 \\
			\cline{2-4}
			& \textbf{3} & \textbf{0.406}
			& 0.468 \\
			\cline{2-4}
			& 4 & 0.529
			& \textbf{0.274} \\
			\hline
			\multirow{6}{*}{2048}
			& 0 & 3.491
			& 3.686 \\
			\cline{2-4}
			& 1 & 2.872
			& 3.087 \\
			\cline{2-4}
			& 2 & 1.190
			& 1.007 \\
			\cline{2-4}
			& 3 & 0.343
			& 0.439 \\
			\cline{2-4}
			& \textbf{4} & \textbf{0.267}
			& \textbf{0.149} \\
			\hline
			\multirow{6}{*}{4096}
			& 0 & 3.492
			& 3.686 \\
			\cline{2-4}
			& 1 & 2.871
			& 3.087 \\
			\cline{2-4}
			& 2 & 1.180
			& 1.004 \\
			\cline{2-4}
			& 3 & 0.312
			& 0.425 \\
			\cline{2-4}
			& \textbf{4} & \textbf{0.134}
			& \textbf{0.090} \\
			\hline
		\end{tabular}
	\end{minipage}
	\begin{minipage}[t]{0.31\textwidth}
		\centering
		\textbf{Gaussian mix (b)} \\
		\vspace{6pt}
		\begin{tabular}{|r|r|r|r|}
			\hline
			\textbf{$n$} & \textbf{$J+1$} & \textbf{SP} & \textbf{Class.} \\
			\hline
			\multirow{6}{*}{128}
			& 0 & 4.315
			& 4.383 \\
			\cline{2-4}
			& 1 & 3.269
			& 3.436 \\
			\cline{2-4}
			& \textbf{2} & \textbf{0.836}
			& 1.225 \\
			\cline{2-4}
			& 3 & 0.912
			& \textbf{0.540} \\
			\cline{2-4}
			& 4 & 4.337
			& 1.909 \\
			\hline
			\multirow{6}{*}{256}
			& 0 & 4.314
			& 4.384 \\
			\cline{2-4}
			& 1 & 3.253
			& 3.430 \\
			\cline{2-4}
			& 2 & 0.747
			& 1.184 \\
			\cline{2-4}
			& \textbf{3} & \textbf{0.471}
			& \textbf{0.326} \\
			\cline{2-4}
			& 4 & 2.143
			& 0.989 \\
			\hline
			\multirow{6}{*}{512}
			& 0 & 4.314
			& 4.383 \\
			\cline{2-4}
			& 1 & 3.247
			& 3.427 \\
			\cline{2-4}
			& 2 & 0.714
			& 1.165 \\
			\cline{2-4}
			& \textbf{3} & \textbf{0.245}
			& \textbf{0.210} \\
			\cline{2-4}
			& 4 & 1.048
			& 0.496 \\
			\hline
			\multirow{6}{*}{1024}
			& 0 & 4.314
			& 4.383 \\
			\cline{2-4}
			& 1 & 3.243
			& 3.425 \\
			\cline{2-4}
			& 2 & 0.690
			& 1.152 \\
			\cline{2-4}
			& \textbf{3} & \textbf{0.137}
			& \textbf{0.151} \\
			\cline{2-4}
			& 4 & 0.519
			& 0.245 \\
			\hline
			\multirow{6}{*}{2048}
			& 0 & 4.313
			& 4.383 \\
			\cline{2-4}
			& 1 & 3.242
			& 3.424 \\
			\cline{2-4}
			& 2 & 0.680
			& 1.147 \\
			\cline{2-4}
			& \textbf{3} & \textbf{0.081}
			& \textbf{0.123} \\
			\cline{2-4}
			& 4 & 0.261
			& 0.123 \\
			\hline
			\multirow{6}{*}{4096}
			& 0 & 4.313
			& 4.383 \\
			\cline{2-4}
			& 1 & 3.241
			& 3.424 \\
			\cline{2-4}
			& 2 & 0.676
			& 1.145 \\
			\cline{2-4}
			& \textbf{3} & \textbf{0.051}
			& 0.110 \\
			\cline{2-4}
			& 4 & 0.128
			& \textbf{0.061} \\
			\hline
		\end{tabular}
	\end{minipage}
	\begin{minipage}[t]{0.31\textwidth}
		\centering
		\textbf{Comb (c)} \\
		\vspace{6pt}
		\begin{tabular}{|r|r|r|r|}
			\hline
			\textbf{$n$} & \textbf{$J+1$} & \textbf{SP} & \textbf{Class.} \\
			\hline
			\multirow{6}{*}{128}
			& 0 & 8.320
			& 8.324 \\
			\cline{2-4}
			& 1 & 6.800
			& 6.818 \\
			\cline{2-4}
			& 2 & 4.565
			& 5.142 \\
			\cline{2-4}
			& \textbf{3} & \textbf{1.972}
			& \textbf{2.082} \\
			\cline{2-4}
			& 4 & 3.743
			& 2.159 \\
			\hline
			\multirow{6}{*}{256}
			& 0 & 8.319
			& 8.323 \\
			\cline{2-4}
			& 1 & 6.793
			& 6.813 \\
			\cline{2-4}
			& 2 & 4.480
			& 5.096 \\
			\cline{2-4}
			& \textbf{3} & \textbf{1.561}
			& 1.861 \\
			\cline{2-4}
			& 4 & 1.973
			& \textbf{1.213} \\
			\hline
			\multirow{6}{*}{512}
			& 0 & 8.319
			& 8.323 \\
			\cline{2-4}
			& 1 & 6.790
			& 6.811 \\
			\cline{2-4}
			& 2 & 4.425
			& 5.074 \\
			\cline{2-4}
			& 3 & 1.332
			& 1.756 \\
			\cline{2-4}
			& \textbf{4} & \textbf{1.087}
			& \textbf{0.745} \\
			\hline
			\multirow{6}{*}{1024}
			& 0 & 8.320
			& 8.323 \\
			\cline{2-4}
			& 1 & 6.790
			& 6.809 \\
			\cline{2-4}
			& 2 & 4.395
			& 5.064 \\
			\cline{2-4}
			& 3 & 1.203
			& 1.700 \\
			\cline{2-4}
			& \textbf{4} & \textbf{0.585}
			& \textbf{0.499} \\
			\hline
			\multirow{6}{*}{2048}
			& 0 & 8.320
			& 8.323 \\
			\cline{2-4}
			& 1 & 6.789
			& 6.809 \\
			\cline{2-4}
			& 2 & 4.380
			& 5.058 \\
			\cline{2-4}
			& 3 & 1.141
			& 1.673 \\
			\cline{2-4}
			& \textbf{4} & \textbf{0.349}
			& \textbf{0.378} \\
			\hline
			\multirow{6}{*}{4096}
			& 0 & 8.320
			& 8.323 \\
			\cline{2-4}
			& 1 & 6.787
			& 6.809 \\
			\cline{2-4}
			& 2 & 4.370
			& 5.055 \\
			\cline{2-4}
			& 3 & 1.103
			& 1.659 \\
			\cline{2-4}
			& \textbf{4} & \textbf{0.225}
			& \textbf{0.317} \\
			\hline
		\end{tabular}
	\end{minipage}
	\caption{(Approximated) MISE of the shape-preserving estimator (SP) and the classical wavelet estimator (Class.) for different sample sizes and different values of $J+1$ ($j_0=0$). The smallest MISE is highlighted for each sample size.}
	\label{tab:mise}
\end{table}

\ppn Analysing Table \ref{tab:mise} reveals that neither estimator seems to have an absolute edge over the other, and the observed differences in MISE are low. For small sample sizes, the classical estimator is usually doing slightly better (although not always). This can be understood as it is based on simple averages which typically behave better than nearest-neighbour distances when the number of observations is not large. On the other hand, for larger samples, the Shape-Preserving (SP) estimator does usually better (although not always). It profits from the fact that it makes proper use of the probability mass that the classical one loses below zero in the low-density areas. This is illustrated by Figure \ref{fig:comparison_figures}, which shows typical estimates for the shape-preserving estimator and the classical one for sample size $n=4096$ ($k=1$, $j_0=0$ and $J=3$). Note how the classic estimator loses mass in areas of low density, even for this large sample. Therefore, although the results in Table \ref{tab:mise} indicate that the classical estimator might be slightly more accurate {\it sensu stricto} ($\text{MISE}=0.090$ for classical, $\text{MISE}=0.134$ for SP), it seems that the SP estimator may be preferable: the `price to pay' (in terms of MISE) for getting estimates which are automatically proper densities is quite low. 

\begin{figure}[!t]
	\centering
	\begin{subfigure}{\linewidth}
		\centering
		\includegraphics[width=0.55\linewidth]{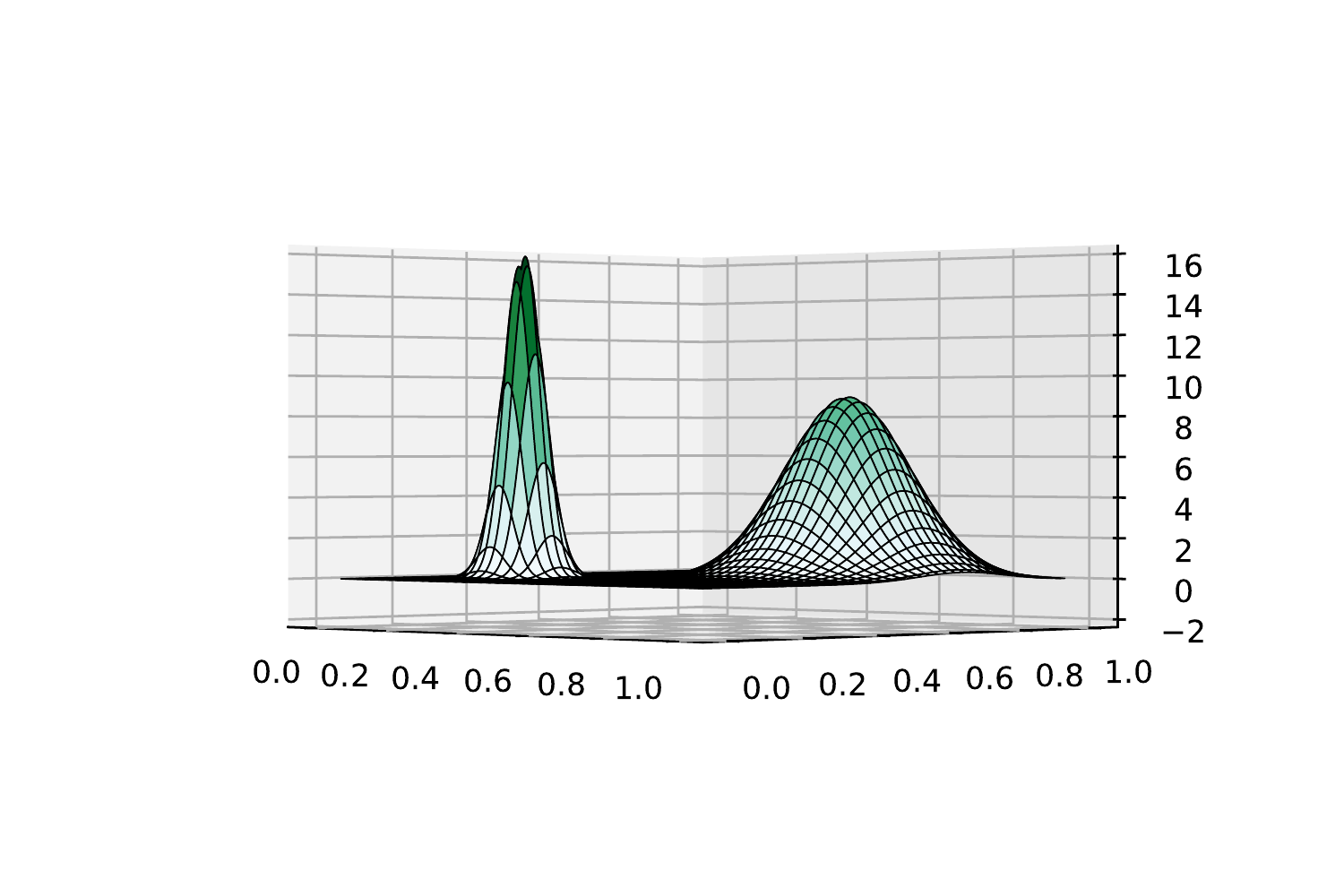}
		\caption{True density}
		\label{fig:comp_true}
	\end{subfigure}
	\begin{subfigure}{\linewidth}
		\centering
		\includegraphics[width=0.55\linewidth]{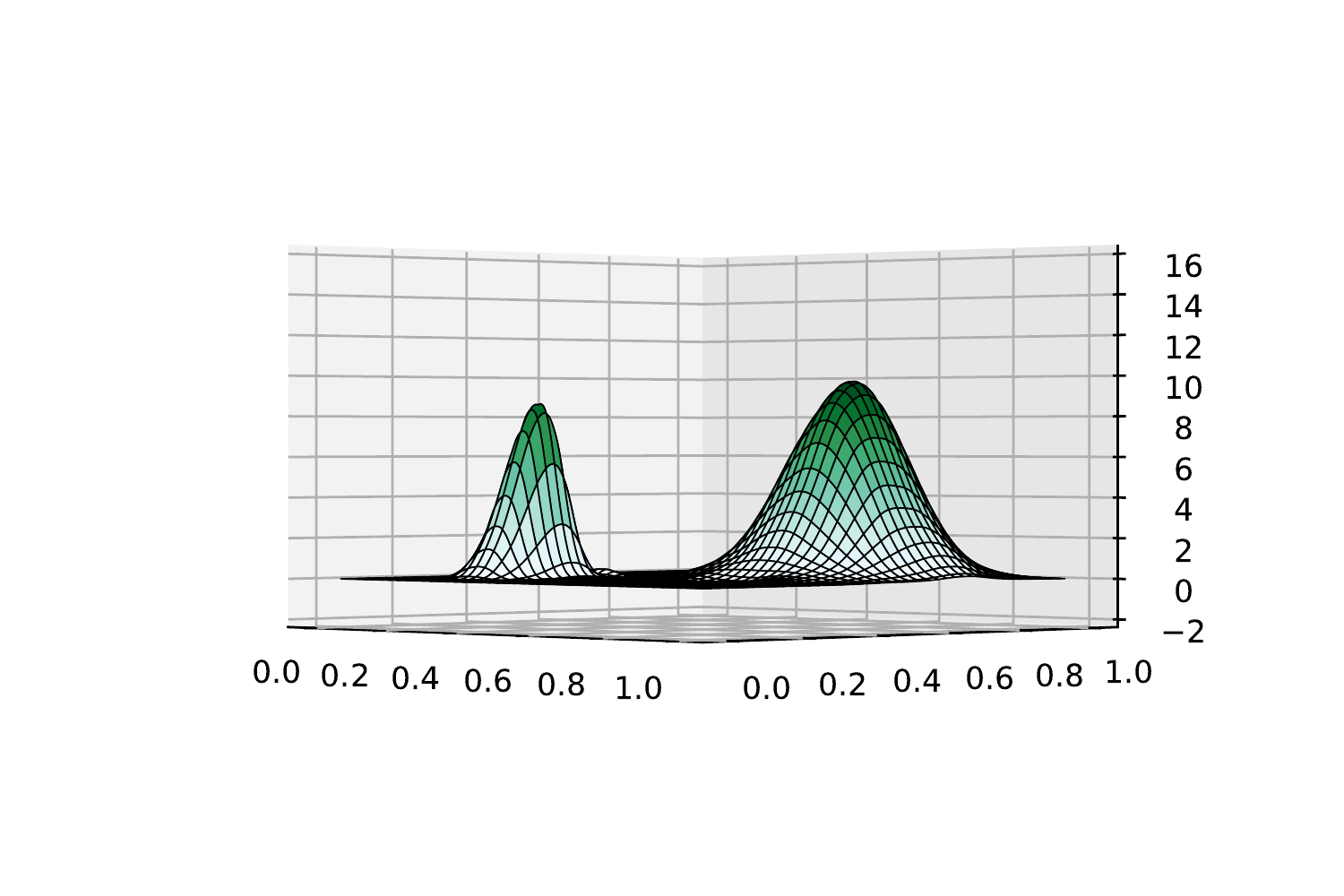}
		\caption{Shape preserving estimator}
		\label{fig:new_wde}
	\end{subfigure}
	\begin{subfigure}{\linewidth}
		\centering
		\includegraphics[width=0.55\linewidth]{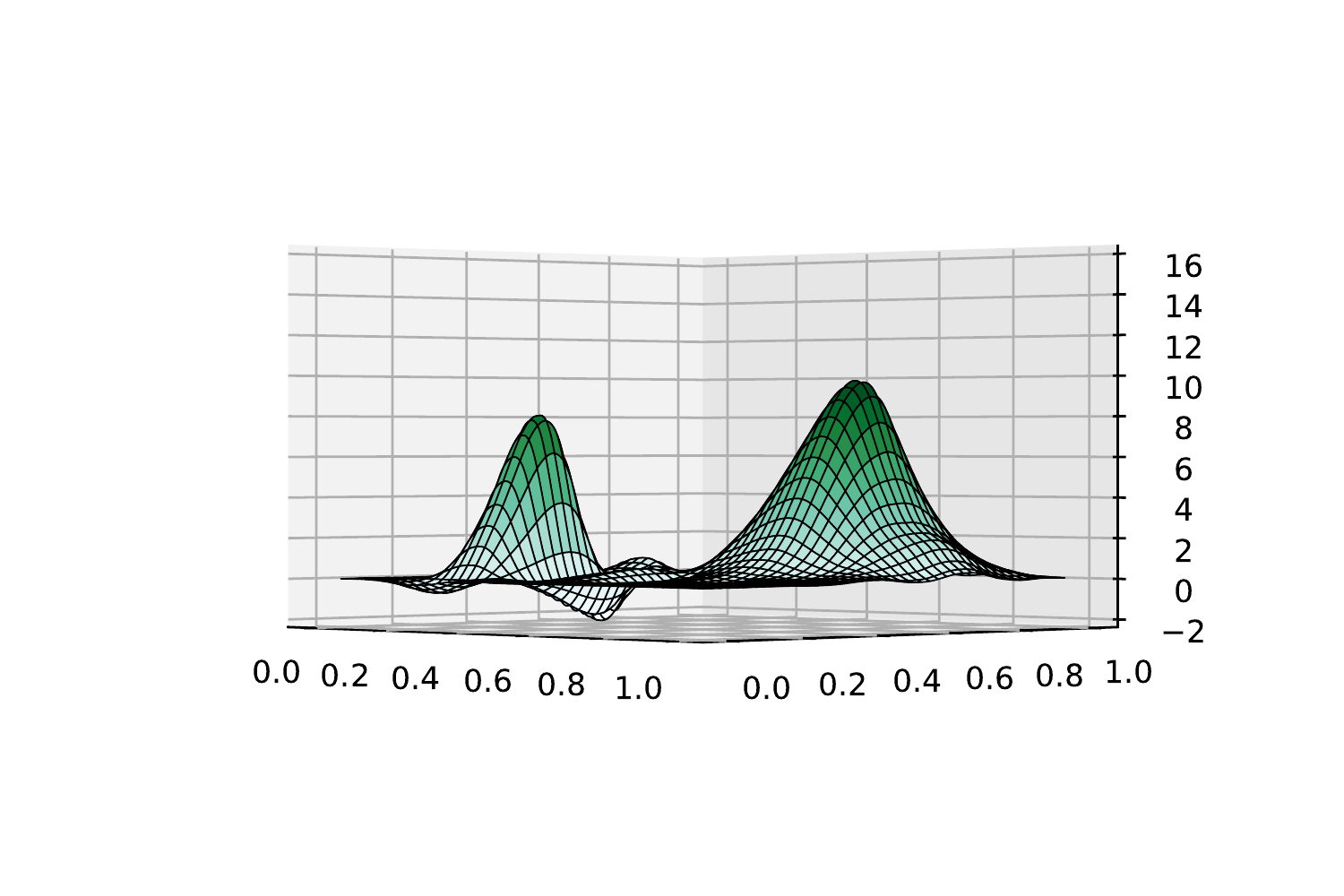}
		\caption{Classical estimator}
		\label{fig:classic_wde}
	\end{subfigure}
	\caption{Comparisons of estimates for Gaussian Mixture (a), $n=4096$, $k=1$ and $j_0=0$ and $J=3$.}
	\label{fig:comparison_figures}
\end{figure}


\ppn Figure \ref{fig:comparison_figures} also reveals how challenging it is, for both estimators, to re-construct two peaks of such different spread. In that respect, the introduction of a thresholding scheme would be helpful to allow a higher resolution to be selected while killing out any unwarranted noise. The shape-preserving estimator is expected to profit more from the introduction of such thresholding, as it is noted from Table \ref{tab:mise} that the classical estimator sometimes allows a higher resolution, already. More on this in Section \ref{sec:ccl}.

\subsection{Real data: Old Faithful geyser} \label{subsec:oldfaithful}

Old Faithful geyser is a very active geyser in the Yellowstone National Park, Wyoming, USA.\footnote{see {\tt www.geyserstudy.org/geyser.aspx?pGeyserNo=OLDFAITHFUL}.} Data on eruption times and waiting times (both in minutes) between eruptions of Old Faithful form a well-known bivariate data set of $n=272$ observations. In particular, it was used for illustration in \cite{Vannucci95}, in a review of different types of wavelet density estimators. The shape-preserving estimator was computed on these data using Daubechies wavelets with 7 vanishing moments (as in \cite{Vannucci95}). The best results were obtained with $j_0=0$ and $J=2$, producing the estimate shown in Figure \ref{fig:oldfaithful}. As opposed to Figure 6 in  \cite{Vannucci95}, the shape-preserving estimator shows some small bumps of potential interest near the main peaks. In view of the raw data (scatter plot, left panel) and other available kernel-based density estimates \citep{Silverman1986,Hyndman96}, this seems legitimate.

\begin{figure}[t!] 
	\begin{subfigure}[t]{0.5\textwidth}
		\includegraphics[width=\textwidth]{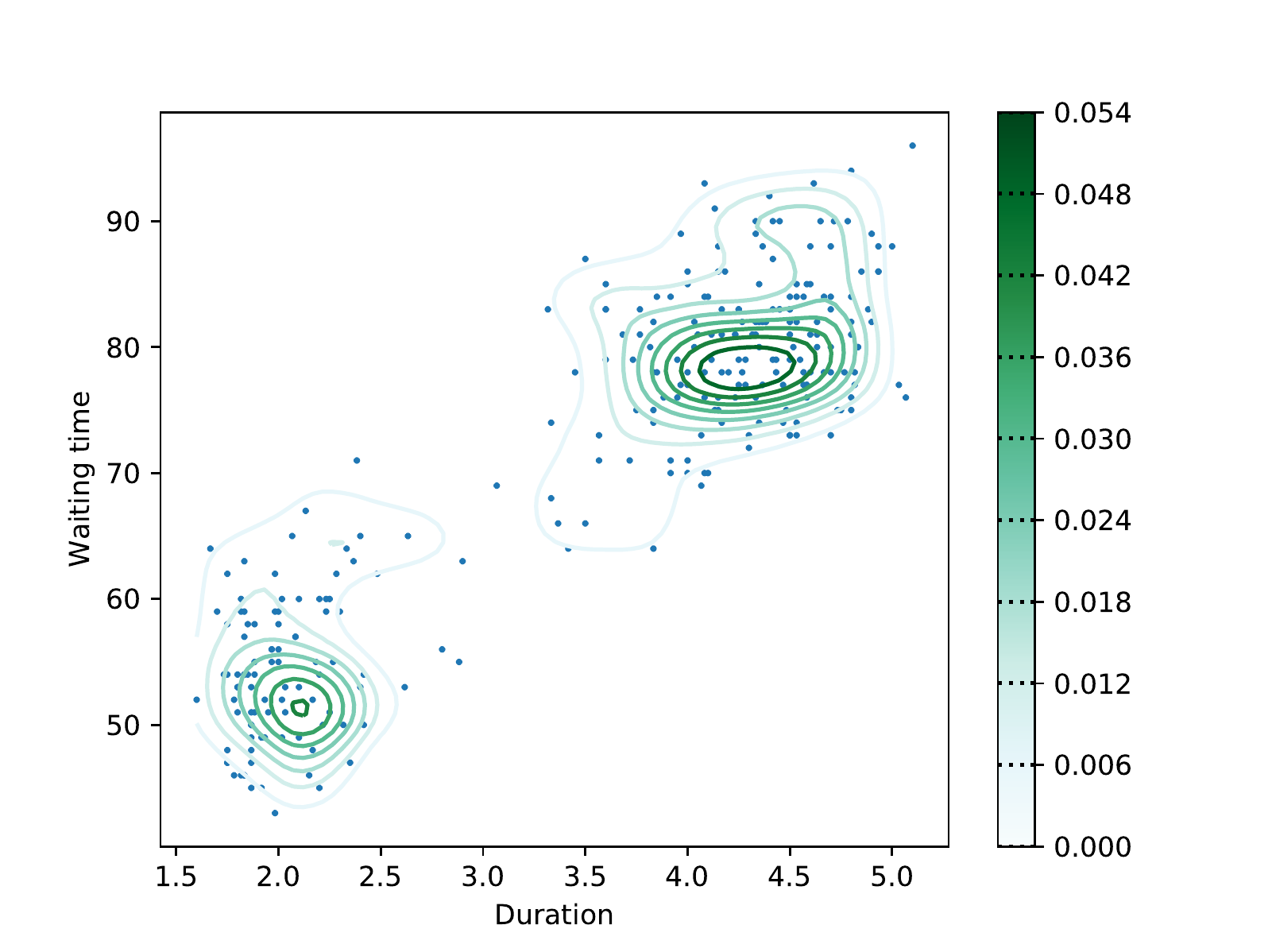}
		\caption{Scatter and contour plot}
		\label{fig:oldf-data}
	\end{subfigure}
	\hfill
	\begin{subfigure}[t]{0.5\textwidth}
		\includegraphics[width=\textwidth]{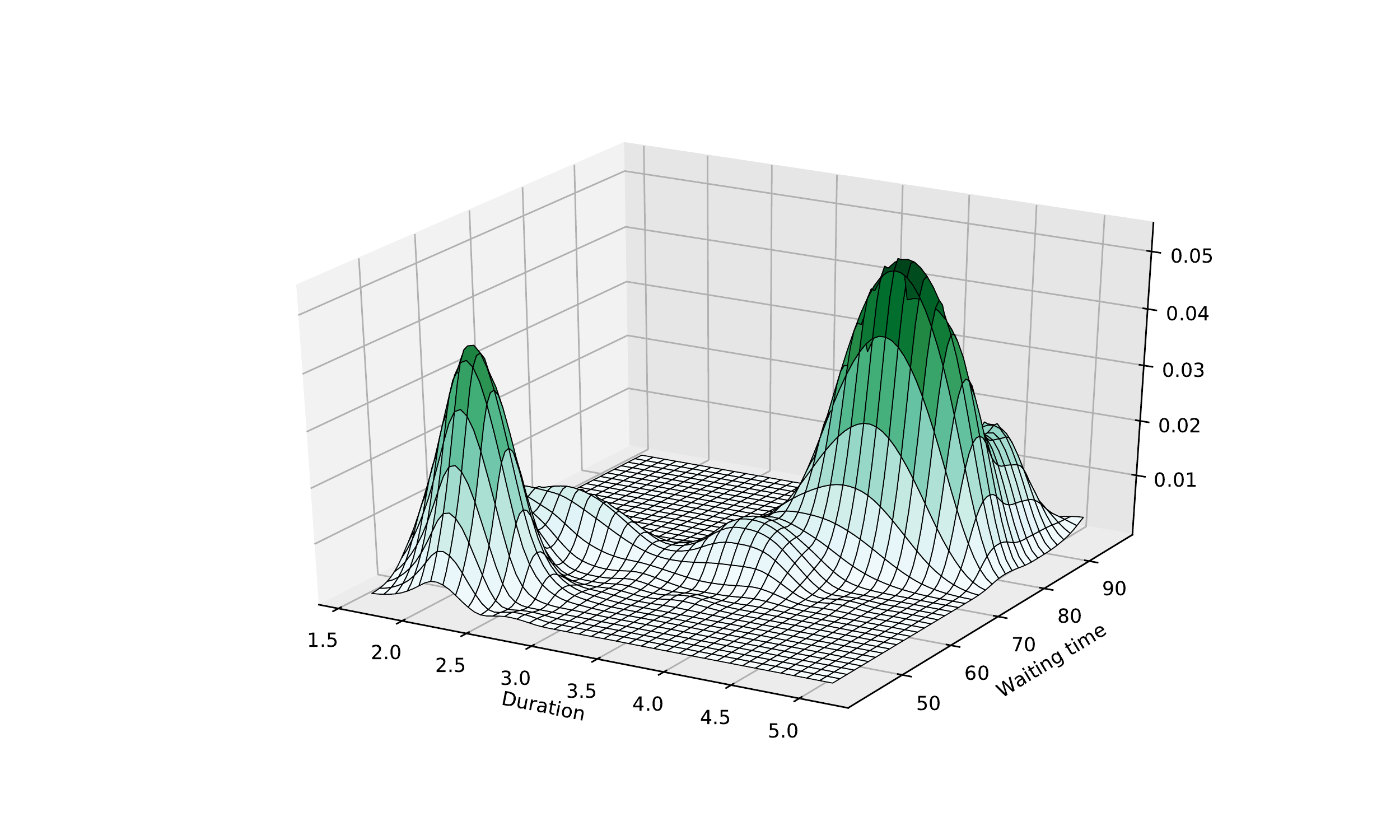}
		\caption{3D-density estimate}
		\label{fig:oldf-wde}
	\end{subfigure}
	\caption{Old Faithful dataset}
	\label{fig:oldfaithful}
\end{figure}

\section{Conclusions and future work}  \label{sec:ccl}

\cite{Penev97} suggested an elegant construction of a wavelet estimator of the square-root of a univariate probability density in order to deal with negativity issues in an automatic way. Based on spacings, their idea could not be easily generalised beyond the univariate case, though. This paper provides such an extension, essentially making use of nearest-neighbour-balls, the ``{\it probabilistic counterpart to univariate spacings}'' \citep{Ranneby05} in higher dimensions. The asymptotic properties of the estimator were obtained. It always attains the optimal rate of convergence in Mean Integrated Square Error in $d=1$ and $d=2$ dimensions, in dimensions up to $d=4$ for reasonably smooth densities, and in all dimensions for `rough' densities. In practice, the estimator was seen to be on par with the classical wavelet estimator, while automatically producing estimates which are always {\it bona fide} densities.

\ppn Continuation of this research includes the introduction of a thresholding scheme. It is well-known that thresholding wavelet coefficients in the classical case gives better estimates in general Besov spaces \citep{DJKP93,DonohoJohnstone98}. For a set of coefficients $\{c_z;z \in \Zd\}$ essentially defining a particular wavelet family, the father wavelet satisfies $\varphi(x) = \sum_{z \in \Z^d} c_{z} \varphi\left(2 x - z\right)$ (and similar for the functions $\psi^{(q)}$'s); see \citet{Daubechies92}. This implies that $\varphi_{j,z}(x) = \sum_{z' \in \Z^d} c_{z'} \varphi_{j+1,z' - 2 z}\left(x\right)$, which, in turn, carries over to the wavelet coefficients, viz.\ $\alpha_{j,z} = \sum_{z' \in \Z^d} c_{z'} \alpha_{j,z - 2z'}$ (and similar for the $\beta$'s). This {\it dilation equation} is often used for motivating and justifying thresholding in the conventional wavelet setting.

\ppn Now, substituting in (\ref{eqn:coefficients1}) yields
\begin{align*}
\hat{\alpha }_{j,z} &=  \frac{\Gamma(k)}{\Gamma(k+1/2)}\,\frac{1}{\sqrt{n}}\sum _{i=1}^n \varphi _{j,z}\left(X_i\right) \sqrt{V_{(k);i}} \\
&=  \frac{\Gamma(k)}{\Gamma(k+1/2)}\,\frac{1}{\sqrt{n}}\sum _{i=1}^n \left\lbrace \sum_{z' \in \Z^d} c_{z'} \varphi_{j+1,z' - 2 z}\left(X_i\right)  \right\rbrace \sqrt{V_{(k);i}} \\
&= \sum_{z' \in \Z^d} c_{z'} \left\lbrace \frac{\Gamma(k)}{\Gamma(k+1/2)}\,\frac{1}{\sqrt{n}} \,   \sum _{i=1}^n \varphi_{j+1,z' - 2 z}\left(X_i\right)  \sqrt{V_{(k);i}} \right\rbrace \\
&= \sum_{z' \in \Z^d} c_{z'} \hat{\alpha }_{j+1,z' - 2z},
\end{align*}
and similar for the $\hat{\beta}^{(q)}_{j,z}$'s from (\ref{eqn:coefficients2}). Hence, although the wavelet estimator developed in this paper is different in nature, the dilation equation applies to the estimated coefficients as it does in the conventional case. This suggests to carry on with thresholding for the shape-preserving estimator as well. 

\ppn Some numerical experiments were carried out and, indeed, it was seen that improvements could be obtained. Figure \ref{fig:threshold-plain} shows the shape-preserving estimator without thresholding on a typical sample of size $n=256$ from the Gaussian mixture (a) (see Section \ref{sec:numE}) using Daubechies wavelets with 6 vanishing moments, $k=1$, $j_0=0$ and $J=3$. This resolution is of course too high at this sample size (see Table \ref{tab:mise}), and the estimate is highly undersmoothed. Then soft thresholding was applied in (\ref{eqn:hatgJ}) on those estimated coefficients $\betahatqjz$ such that $|\beta_{j,k}| < C \sqrt{j + 1}/\sqrt{n}$, for an appropriate $C$ \citep{Delyon96}. The improvement is visually obvious (Figure \ref{fig:threshold-on}). The formal theoretical study of such a thresholding scheme is beyond the scope of this paper, though, and will be investigated in a follow-up paper. 

\begin{figure}[t!]
	\begin{subfigure}[t]{0.33\textwidth}
		\includegraphics[width=\textwidth]{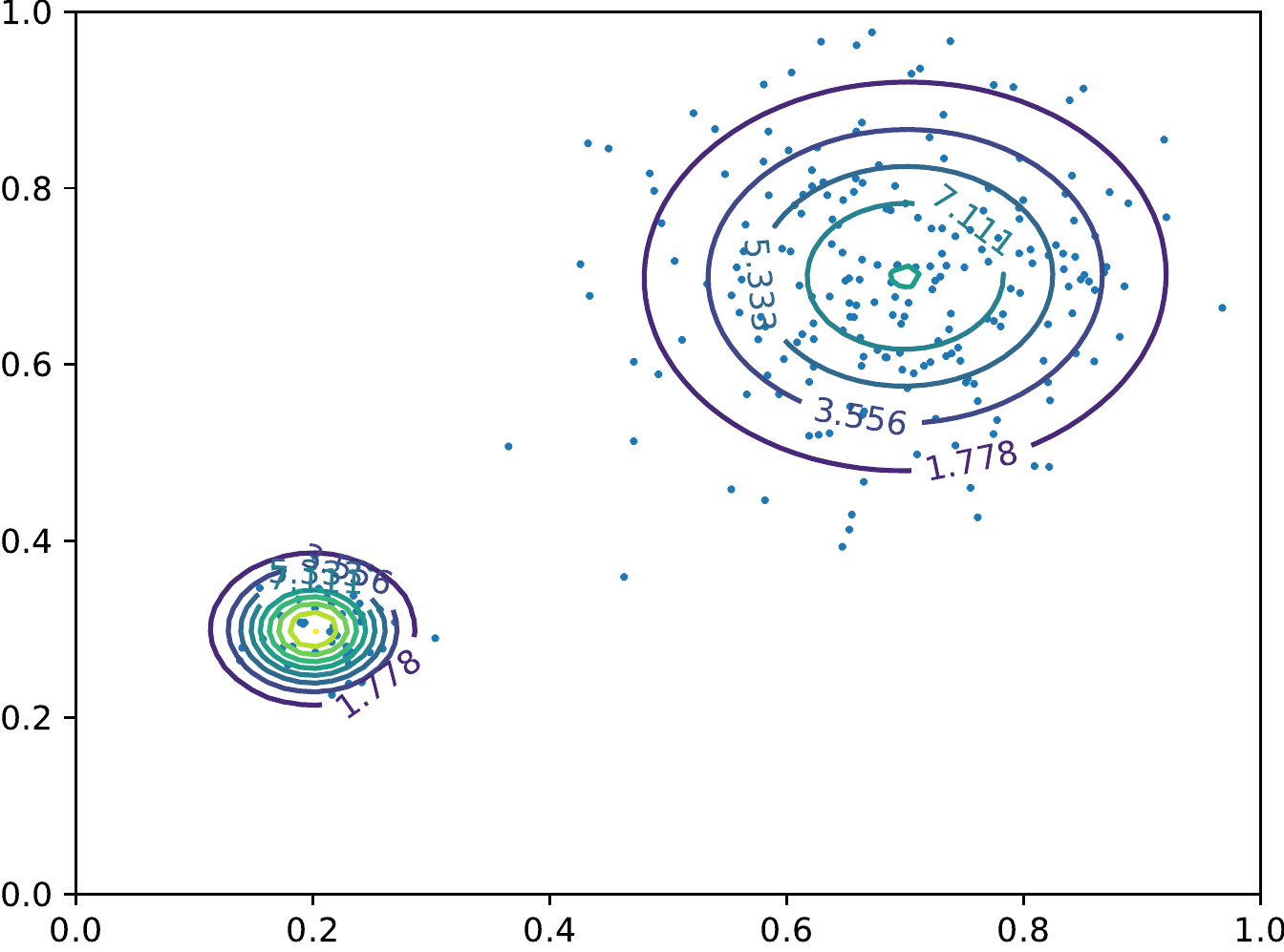}
		\caption{True density}
		\label{fig:threshold-true}
	\end{subfigure}
	\begin{subfigure}[t]{0.33\textwidth}
		\includegraphics[width=\textwidth]{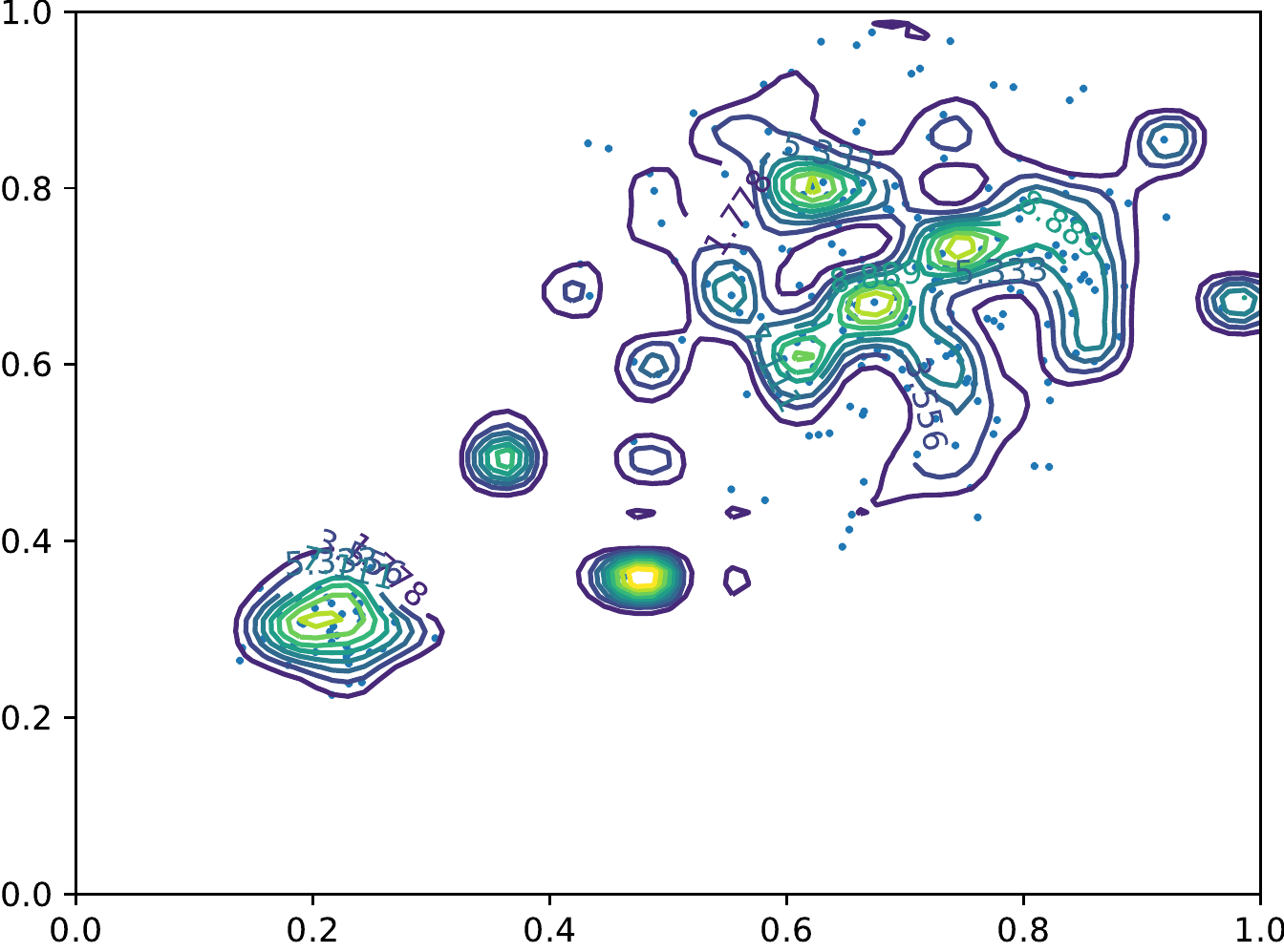}
		\caption{No thresholding}
		\label{fig:threshold-plain}
	\end{subfigure}
	\begin{subfigure}[t]{0.33\textwidth}
		\includegraphics[width=\textwidth]{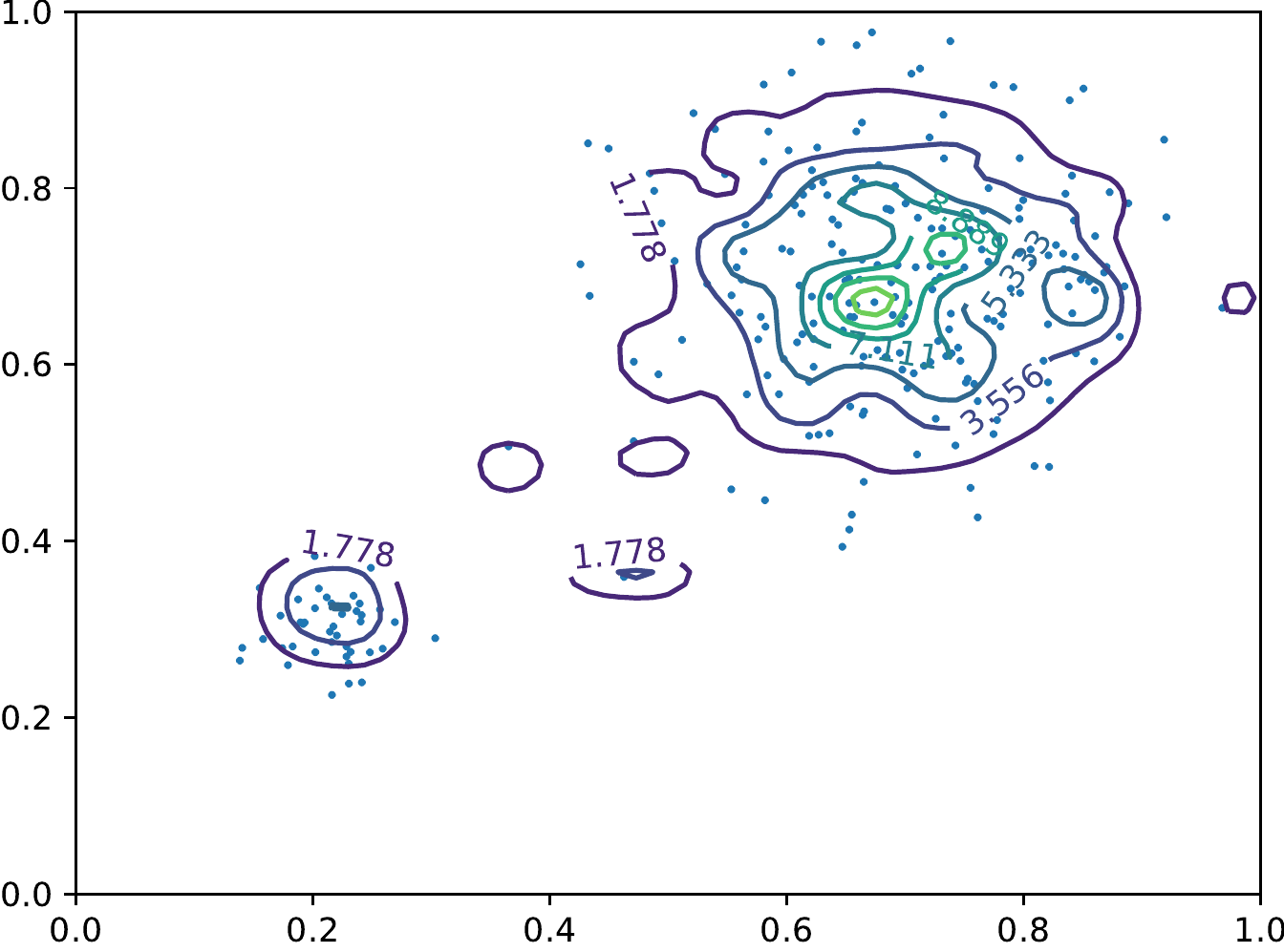}
		\caption{Thresholding}
		\label{fig:threshold-on}
	\end{subfigure}
	\caption{Shape-preserving estimates without (b) and with (c) thresholding for Gaussian mixture (a), $n=256, k=1, j_0=0, J=3$.}
	\label{fig:threshold}
\end{figure}

\ppn Our theoretical results also provide some avenue for dealing with generalisations of sums like (\ref{eqn:coefficients1}). For instance, \cite{Agarwal17} consider estimating the Fourier transform of the square-root of a probability density, viz.\ $\mathcal{F}\{ \sqrt{f}\}(\omega) = \int_{\R^d} e^{-i\omega x} \sqrt{f}(x)\diff x$, where $i = \sqrt{-1}$. Lemma \ref{lem:mainlemma} ensures that 
\[\frac{\Gamma(k)}{\Gamma(k+1/2)}\,\frac{1}{\sqrt{n}} \sum _{j=1}^n e^{- i \omega X_j} \sqrt{V_{(k);j}} \]
is an asymptotically unbiased estimate of $\mathcal{F}\{ \sqrt{f}\}(\omega)$ for all $\omega$'s, which could be used in several frameworks. Finally, the estimator proposed here may provide interesting benefits in more applied settings as well, for instance for image and shape recognition, in the spirit of \cite{Peter08} and \cite{Peter17}.  

\section*{Acknowledgements}

This research involved computations using the Linux computational cluster Katana supported by the Faculty of Science, UNSW Sydney. The work of Gery Geenens was supported by a Faculty Research Grant from the Faculty of Science, UNSW Sydney. The work of Spiridon Penev was partially supported by the Australian Government through the Australian Research Council's Discovery Projects funding scheme (project DP160103489).

\appendix
\section{Appendix} \label{sec:app}

\subsection*{Preliminaries}

First some preliminary concepts and technical results are presented.

\ppn For any convex and compact $C \subset \mathbb{R}^d$, let $\partial C$ denote its boundary. For $\eta > 0$, define the $\eta$-belt of $C$ as
\[C^{\left (<\eta \right )} = \left \{x  \in  C  \colon  \inf_{y\in \partial C}  \Norm{y-x} <\eta \right \},\]
the set of points in $C$ within Euclidean distance $\eta$ or less from $\partial C$. Also, we call $C^{\left (>\eta \right )}$ = $\mathit{C} \backslash C^{\left (<\eta \right )}$ the $\eta$-interior of $C$.
\ppn Fix $x\in C$, call $B_{x}\left(r\right)$ the ball of radius $r$ centred at $x$ and $\mu(B_x(r)) = c_0 r^d$ its volume ($\mu$ is the Lebesgue measure on $\R^d$, $c_0 = \frac{\pi^{d/2}}{\Gamma(d/2+1)}$). 
Results in \cite{Percus98} and \cite{Evans02} show that the following two properties hold for any compact and convex set $C \subset \R^d$:

\textbf{C1}. There exists $c_{2}>0$, independent of $x \in C$, such that for $r < \sup_{x,y \in C} \|x-y\|$, $\mu \left (B_{x}\left (r\right )\cap C\right ) \geq c_{2}  r^{d}$ ;

\textbf{C2}. There exist constants $\lambda  > 0$ and $c_{3}>0$
such that for all $0 < \eta  < \lambda$, $\mu \left(C^{\left(<\eta\right)}\right) < c_{3}\eta$.

\ppn The following technical lemma will be used repeatedly in the proofs below.

\begin{lemma}\label{lem:mainlemma} Let $\Xs = \{X_1,\ldots,X_n\}$ be a random sample from a distribution $F$ admitting a density $f$ supported on $C \subset \R^d$ satisfying Assumption \ref{ass:f}. Let $R_{(k);i}$ be the distance between $X_i$ and its $k$th nearest neighbour in the sample, as defined in Section \ref{subsec:motiv}. Let $\phi: \R^d \to \R$ be bounded on $C$ and $a > 0$ such that $\int_{\R^d} \phi\left(x\right)f \left(x\right)^{1-a}dx < \infty$.  Then, $\forall i \in \{1,\ldots,n\}$, as $n \to \infty$,
\begin{equation*}
\E\left(\phi\left(X_i\right)  R_{(k);i}^{a d} \right)  = \E\left(\phi\left(X_i\right) \E\left( R_{(k);i}^{a d}\big| X_i\right) \right) = \frac{1}{n^a} \frac{\Gamma \left(k + a\right)}{\Gamma\left(k\right)}\frac{1}{c_0^a} \left( \int_{\R^d} \phi\left (x\right )  f\left (x\right )^{1-a}  \diff x +O\left(n^{-1/d}\right)\right).  \end{equation*}
\end{lemma}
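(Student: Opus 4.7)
The plan is to compute the conditional expectation $\E(R_{(k);i}^{ad} \mid X_i = x)$ exactly, change variables so the leading behaviour becomes a gamma integral, and then integrate against $\phi(x) f(x)$ while controlling boundary effects using property \textbf{C2}.

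First I would condition on $X_i = x$ and use the identity
\[
\E(R_{(k);i}^{ad} \mid X_i = x) = a d \int_0^{D(x)} r^{ad-1}\, \P\bigl(\mathrm{Bin}(n-1, p_x(r)) \leq k-1\bigr)\, dr,
\]
where $p_x(r) = \int_{B_x(r) \cap C} f(y)\, dy$ and $D(x) = \sup_{y\in C}\|y-x\|$. The Taylor expansion of $p_x$ around $r=0$ gives, for $x \in C^{(>\eta)}$ (so the ball sits inside $C$), $p_x(r) = c_0 f(x) r^d + O(r^{d+1})$, uniformly in $x$, since $f$ has bounded first-order partial derivatives by Assumption \ref{ass:f}. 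This is the key place where the $O(n^{-1/d})$ remainder is generated.

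Next I would substitute $s = (n-1) c_0 f(x) r^d$, turning the integral into
\[
\E(R_{(k);i}^{ad} \mid X_i = x) = \frac{1}{((n-1) c_0 f(x))^{a}} \int_0^{S_n(x)} a s^{a-1}\, \P\bigl(\mathrm{Bin}(n-1, p_x(r(s))) \leq k-1\bigr)\, ds.
\]
For $s$ in any fixed compact range, $(n-1) p_x(r(s)) \to s$, so the binomial probability converges to $\P(\mathrm{Poisson}(s) \leq k-1) = \sum_{j=0}^{k-1} e^{-s} s^j / j!$. The leading integral is therefore
\[
a \sum_{j=0}^{k-1} \frac{1}{j!} \int_0^\infty s^{a+j-1} e^{-s}\, ds = a \sum_{j=0}^{k-1} \frac{\Gamma(a+j)}{j!} = \frac{\Gamma(k+a)}{\Gamma(k)},
\]
using the combinatorial identity $\sum_{j=0}^{k-1} \Gamma(a+j)/j! = \Gamma(k+a)/(a\Gamma(k))$, which is easily verified by induction on $k$. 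The error from the Poisson--binomial and Taylor approximations, together with the replacement $n-1 \sim n$, contributes $O(n^{-1/d})$ to the bracketed factor, so
\[
\E(R_{(k);i}^{ad} \mid X_i = x) = \frac{1}{n^{a} c_0^{a} f(x)^a} \cdot \frac{\Gamma(k+a)}{\Gamma(k)} \bigl(1 + O(n^{-1/d})\bigr),
\]
uniformly on $C^{(>\eta_n)}$ for a suitable sequence $\eta_n \downarrow 0$ (e.g.\ $\eta_n \asymp n^{-1/d}$).

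Finally I would split
\[
\E(\phi(X_i) R_{(k);i}^{ad}) = \int_{C^{(>\eta_n)}} \phi(x) f(x)\, \E(R_{(k);i}^{ad} \mid X_i=x)\, dx + \int_{C^{(<\eta_n)}} \phi(x) f(x)\, \E(R_{(k);i}^{ad} \mid X_i=x)\, dx.
\]
The interior integral, by the display above and since $f$ is bounded away from $0$, gives the claimed leading term plus an $O(n^{-1/d - a})$ remainder after multiplying by $n^{a}$. For the boundary term I would use property \textbf{C1} to get a uniform upper bound $\E(R_{(k);i}^{ad} \mid X_i=x) \leq C n^{-a}$ (since $c_2 r^d \leq p_x(r)$ replaces $c_0 f(x) r^d$ and the same change-of-variables argument applies with worse constants), and then property \textbf{C2}, $\mu(C^{(<\eta_n)}) \leq c_3 \eta_n$, to bound the boundary contribution by $O(\eta_n / n^{a}) = O(n^{-a-1/d})$. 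Combining gives the stated result.

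The main obstacle is the uniform control of the binomial-to-Poisson error together with the Taylor-remainder error when integrated against $s^{a-1}$ up to $s = S_n(x)$, since for $s$ of order $n$ neither approximation is good; this forces a further truncation (say at $s = M$ or $s = \log n$) and a crude tail bound $\P(\mathrm{Bin}(n-1,p_x(r)) \leq k-1) \leq e^{-c s}$ for $s$ large and $n$ large, echoing the extension of Theorem~5.4 of Evans (2002) cited in the statement. Once these estimates are in place, the rest is bookkeeping.
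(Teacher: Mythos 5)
Your proposal is correct in architecture and reaches the right constant, but it takes a genuinely different route at the key step. The paper starts from the exact conditional law of $R_{(k);i}$ (Lemma 4.1 of Evans et al.\ (2002)): after the change of variable $\omega=\omega_x(r)$, the quantity $\omega_x(R_{(k);i})$ has an exact Beta$(k,n-k)$ density, so $\E\bigl(R_{(k);i}^{ad}\mid X_i=x\bigr)$ becomes $k\binom{n-1}{k}\int_0^1 h_x(\omega)^{ad}\omega^{k-1}(1-\omega)^{n-k-1}\,\diff\omega$ and the factor $\frac{\Gamma(k+a)}{\Gamma(k)}\frac{\Gamma(n)}{\Gamma(n+a)}$ drops out \emph{exactly} from the Beta integral once $h_x(\omega)^{ad}$ is replaced by $\bigl(\omega/(c_0 f(x))\bigr)^a(1+O(\delta_n))$; the only approximation to control is this local inversion of $\omega_x$, and the region $r>\delta_n$ is discarded wholesale via Evans' Lemma 5.3 as $O(n^{-b})$ for every $b$. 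You instead use the survival-function identity with the binomial tail, pass to a Poisson limit, and recover the same constant through $a\sum_{j=0}^{k-1}\Gamma(a+j)/j!=\Gamma(k+a)/\Gamma(k)$ (your identity checks out). What your route costs is exactly the obstacle you flag at the end: uniform binomial-to-Poisson error bounds integrated against $s^{a-1}$ plus a tail truncation with an exponential bound -- real work that you only sketch, and which the exact-Beta representation makes vanish entirely. What it buys is transparency about where the $\Gamma(k+a)/\Gamma(k)$ comes from (a Poisson mixture of gamma integrals) and, in the belt term, a cleaner mechanism: you correctly invoke \textbf{C1} to get the uniform \emph{lower} bound $p_x(r)\geq c_2 a_1 r^d$ needed for an upper bound on $\E(R_{(k);i}^{ad}\mid X_i=x)$, which is in fact the right direction of inequality for that step (the paper's belt argument as written appeals to the upper bound $\omega_x(r)<a_2c_0r^d$, which gives a lower bound on $h_x$; the bound it needs is the one you use). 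If you complete the Poisson-error and tail estimates you outline, your proof is a valid alternative; the quickest way to close the gap, however, is to switch to the exact Beta density at the outset.
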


\begin{proof}
Call
\[\omega _{x}\left (r\right ) = \int _{B_{x}\left (r\right )}f  \left (z\right ) \,\diff z,  \]
the probability that the random variable $X \sim F$ falls in $B_{x}\left(r\right)$, and set $\omega_i(r) \doteq \omega _{X_i}(r)$ when referring to the ball centred at one particular observation $X_i$ from the sample. 
Let $F_{(k);i}$ be the distribution function of $R_{(k);i}$ for fixed $X_i$, that is, $F_{(k);i}(r) = \P(R_{(k);i} \leq r|X_i)$. With $X_i$ fixed, Lemma 4.1 in \cite{Evans02} writes
\[\diff F_{(k);i}(r) = k \binom{n-1}{k} \omega_i(r)^{k-1} (1-\omega_i(r))^{n-k-1} \,\diff\omega_i(r). \]
Hence
\[\E\left( R_{(k);i}^{ad}\big|X_i\right) =  k \binom{n-1}{k} \int_0^{c_1} r^{ad} \omega_i(r)^{k-1} (1-\omega_i(r))^{n-k-1} \,\diff\omega_i(r).\]
Since $f$ is positive on $C$ and $C$ is convex, $\omega_i(r)$ is strictly increasing for $0\leq r  \leq r_0$ for some $r_0$, and $\omega_i(r) \equiv 1$ for $r_0 \leq r$. Writing $h_i(\omega)$ for the inverse function $\omega_i^{-1}$ (where it exists), a change of variable yields
\[\E\left( R_{(k);i}^{ad}\big|X_i\right) =  k \binom{n-1}{k} \int_0^{1} h_i(\omega)^{ad} \omega^{k-1} (1-\omega)^{n-k-1} \diff\omega. \]
Define $\delta_n \doteq n^{-1/d}$, and break this expectation down into
\begin{align*} \E\left( R_{(k);i}^{ad}\big|X_i\right) & =  k \binom{n-1}{k} \int_0^{\omega_i(\delta_n)} h_i(\omega)^{ad} \omega^{k-1} (1-\omega)^{n-k-1} \diff\omega + k \binom{n-1}{k} \int_{\omega_i(\delta_n)}^1 h_i(\omega)^{ad} \omega^{k-1} (1-\omega)^{n-k-1} \diff\omega \\
& = k \binom{n-1}{k} \int_0^{\omega_i(\delta_n)} h_i(\omega)^{ad} \omega^{k-1} (1-\omega)^{n-k-1} \diff\omega + O(n^{-b})
\end{align*}
for all $b>0$, uniformly in $X_i$, as per Lemma 5.3 of \cite{Evans02}.  

\ppn Now, with $h_x = \omega^{-1}_x$, see that
\begin{align} \E\left(\phi\left(X_i\right) \E\left( R_{(k);i}^{a d}\big| X_i\right) \right) & = \int_C \phi(x) \left\{  k \binom{n-1}{k} \int_0^{1} h_x(\omega)^{ad} \omega^{k-1} (1-\omega)^{n-k-1} \diff\omega\right\}f(x)\diff x \label{eqn:EphiER} \\ &  = \int_C \phi(x) \left\{  k \binom{n-1}{k} \int_0^{\omega_x(\delta_n)} h_x(\omega)^{ad} \omega^{k-1} (1-\omega)^{n-k-1} \diff\omega\right\}f(x)\diff x + O(n^{-b}), \notag
\end{align}
as $\phi$ and $f$ are bounded on the compact $C$. As $b$ can be taken arbitrarily large, the remainder term can be neglected in front of any term tending to 0 polynomially fast. Hence, (asymptotically) all contribution to the inner integral in (\ref{eqn:EphiER}) comes from the set $\omega \in (0,\omega_x(\delta_n))$, that is, when $R_{(k);i}$ is smaller than $\delta_n$.

\ppn Now, write (\ref{eqn:EphiER}) as
\[\int_C \ldots \diff x = \int_{C^{(>\delta_n)}} \ldots \diff x + \int_{C^{(<\delta_n)}} \ldots \diff x \doteq (I) + (II)\]
with $C^{(>\delta_n)}$ and  $C^{(<\delta_n)}$ the $\delta_n$-interior and $\delta_n$-belt of $C$ as defined above.

\ppn {\bf Integral $(I)$:} $\int_{C^{(>\delta_n)}} \ldots \diff x$, hence $x \in$ $\delta_n$-interior and the distance from $x$ to $\partial C$ is at least $\delta_n$. Hence for all $r \leq \delta_n$, $B_x(r) \cap C = B_x(r)$. The first mean value theorem for definite integrals establishes the existence of $\xi _{1} \in B_{x}\left (r\right) \subset C$ such that
\begin{equation} \omega _{x}\left (r\right ) = \int _{B_{x}\left (r\right )}f  \left (z\right )  \diff z = f  \left (\xi _{1}\right )  \mu   \left (B_{x}\left (r\right )\right) = f(\xi_1)c_0 r ^d. \label{eqn:omegaxr} \end{equation}
By the mean value theorem, there is a $\xi _{2}$ between $x$ and $\xi_1$, hence $\xi_2 \in B_{x}\left (r\right ) \subset C$, such that $f  \left (\xi _{1}\right ) = f\left(x\right )+\nabla f  \left (\xi _{2}\right )' \left (x-\xi_{1}\right)$. Because $\xi_1 \in B_x(\delta_n)$ and $\|\nabla f  \left (\xi_{2}\right) \| < M$ for an absolute constant $M$ (the partial derivatives of $f$ are uniformly bounded on $C$ by Assumption \ref{ass:f}), we have $\left|f\left(\xi_{1}\right )-f\left (x\right) \right| < \delta_n M$ and hence $f\left (\xi _{1}\right ) = f  \left (x \right ) + O\left(\delta_n\right)$. Substitution in (\ref{eqn:omegaxr}) gives $\omega _{x}\left (r\right ) = \left (f  \left (x \right ) + O  \left (\delta_n \right )\right )c_0 r ^d$. As $f$ is bounded from below, this means that, as $n \to \infty$,
\[h_x(\omega)=\left(\frac{\omega}{c_0f(x)} \right)^{1/d} (1+O(\delta_n)), \]
where the $O(\delta_n)$-term holds uniformly in $x$ and $\omega$. This can be substituted in the inner integral of (\ref{eqn:EphiER}), and we obtain
\begin{align*}
\int_{C^{(>\delta_n)}} \phi(x) & \left\{  k \binom{n-1}{k} \int_0^{1} h_x(\omega)^{ad} \omega^{k-1} (1-\omega)^{n-k-1} \diff\omega\right\}f(x) \diff x \\
& = \frac{\Gamma(n)}{\Gamma(k)\Gamma(n-k)}(1+O(\delta_n)) \int_{C^{(>\delta_n)}}  \frac{\phi(x)f^{1-a}(x)}{c^a_0} \diff x \int_0^1 \omega^{a+k-1}(1-\omega)^{n-k-1}\diff \omega  \\
& = \frac{\Gamma(n)}{\Gamma(k)\Gamma(n-k)} \frac{\Gamma(k+a)\Gamma(n-k)}{\Gamma(n+a)}(1+O(\delta_n)) \int_{C^{(>\delta_n)}}  \frac{\phi(x)f^{1-a}(x)}{c^a_0}\diff x  \\
& = \frac{\Gamma(n)}{\Gamma(n+a)} \frac{\Gamma(k+a)}{\Gamma(k)}\frac{1}{c_0^a}(1+O(\delta_n)) \int_{C^{(>\delta_n)}}  \phi(x)f^{1-a}(x)\diff x.
\end{align*}

\ppn Now, given that $f$ is bounded from below and above on $C$, $f(x)^{1-a} \leq a_3 \doteq \max\{(1/a_1)^{1-a},a_2^{1-a}\}$, and by {\bf C2} above, $\mu(C^{(<\delta_n)})< c_3 \delta_n$ for $n$ large enough. So,
\[ \left|\int_{C^{(<\delta_n)}} \phi(x)f^{1-a}(x)\diff x\right| \leq \sup_{x \in C^{(<\delta_n)}} |\phi(x)| a_3 c_3 \delta_n = O(\delta_n),\]
as $n \to \infty$. Therefore,
\[\int_{C^{(>\delta_n)}}  \phi(x)f^{1-a}(x)\diff x= \int_C \phi(x)f^{1-a}(x)\diff x +O(\delta_n) = \int_{\R^d} \phi(x)f^{1-a}(x)\diff x +O(\delta_n).\]
Noting that $\Gamma(n)/\Gamma(n+a) = n^{-a}(1+O(n^{-1})) = n^{-a}(1+O(\delta_n))$, we finally get
\begin{multline} \int_{C^{(>\delta_n)}} \phi(x)  \left\{  k \binom{n-1}{k} \int_0^{1} h_x(\omega)^{ad} \omega^{k-1} (1-\omega)^{n-k-1} \diff\omega\right\}f(x) \diff x \\ = \frac{1}{n^a} \frac{\Gamma \left(k + a\right)}{\Gamma\left(k\right)} \frac{1}{c_0^a}\left( \int_{\R^d}\phi\left (x\right )  f\left (x\right )^{1-a}  \diff x +O\left(n^{-1/d}\right)\right). \label{eqn:I} \end{multline}

\ppn {\bf Integral $(II)$:} $\int_{C^{(<\delta_n)}} \ldots \diff x$, hence we can no more assume that $B_{x}(r)\subset C$. However, as $\sup_{x\in C^{(<\delta_n)}} f(x) \leq \sup_{x\in C} f(x) \leq a_2$ and $\mu(B_x(r) \cap C) < \mu(B_x(r) = c_0 r^d$, it holds $\omega_x(r) < a_2 c_0 r^d$. An upper bound for its inverse is thus $h_x(\omega) \leq (a_2 c_0)^{-1/d} \omega^{1/d}$. Hence,
\begin{align*}
|(II)| & \leq \int_{C^{(<\delta_n)}} |\phi(x)| \left\{  k \binom{n-1}{k} \int_0^{1} h_x(\omega)^{ad} \omega^{k-1} (1-\omega)^{n-k-1} \diff\omega\right\}f(x)\diff x \\
& \leq \int_{C^{(<\delta_n)}} |\phi(x)| \left\{  k \binom{n-1}{k} \int_0^{1} (a_2 c_0)^{-a} \omega^{a+k-1} (1-\omega)^{n-k-1} \diff\omega\right\}f(x)\diff x \\
& = \frac{\Gamma(a+k)}{\Gamma(k)} \,\frac{\Gamma(n)}{\Gamma(n+a)} (a_2 c_0)^{-a} \int_{C^{(<\delta_n)}} |\phi(x)|f(x)\diff x \\
& \leq \frac{\Gamma(a+k)}{\Gamma(k)} \,\frac{\Gamma(n)}{\Gamma(n+a)} (a_2 c_0)^{-a} \sup_{x \in C} |\phi(x)| a_2 c_3 \delta_n,
\end{align*}
by {\bf C2} above. Thus,
\begin{equation} |(II)| \leq \frac{\Gamma(k+a)}{\Gamma(k)} O(n^{-a} \delta_n) = \frac{\Gamma(k+a)}{\Gamma(k)} O(n^{-a-1/d} ), \qquad \text{ as } n \to \infty. \label{eqn:II} \end{equation}

\ppn Putting together (\ref{eqn:I}) and (\ref{eqn:II}) in (\ref{eqn:EphiER}), it follows
\[  \E\left(\phi\left(X_i\right) R_{(k);i}^{a d} \right) = \frac{1}{n^a} \frac{\Gamma \left(k + a\right)}{\Gamma\left(k\right)}\frac{1}{c_0^a} \left( \int_{\R^d}\phi\left (x\right )  f\left (x\right )^{1-a}  \diff x +O\left(n^{-1/d}\right)\right),\]
as announced.
\end{proof}

\subsection*{Proof of Proposition \ref{prop:coefconsist}}

The proof is given for the coefficients $\hat{\alpha}_{j,z}$. The proof for the coefficients $\hat{\beta}^{(q)}_{j,z}$ is identical.

\ppn {\bf Bias:} From (\ref{eqn:coefficients1}), we have with (\ref{eqn:vol})
\[ \E(\hat{\alpha}_{j,z} ) = \E\left(\frac{\Gamma(k)}{\Gamma(k+1/2)}\,\frac{1}{\sqrt{n}}\sum _{i=1}^n \varphi _{j,z}\left(X_i\right) \sqrt{V_{(k);i}} \right) = n^{1/2}\,\frac{\Gamma(k)}{\Gamma(k+1/2)} \sqrt{c_0}\,\E\left( \varphi _{j,z}\left(X_1\right) R^{d/2}_{(k);1}\right).\]
Applying Lemma \ref{lem:mainlemma} with $\phi = \varphi_{j,z}$ and $a = 1/2$ yields
\[\E\left( \varphi _{j,z}\left(X_1\right) R^{d/2}_{(k);1}\right) = n^{-1/2}\,\frac{\Gamma(k+1/2)}{\Gamma(k)}\,\frac{1}{\sqrt{c_0}}\left(\int_{\R^d}\varphi _{j,z}(x) \sqrt{f}(x)\diff x + O(n^{-1/d})\right),\]
which gives
\[ \E(\hat{\alpha}_{j,z} ) =  \int_{\R^d}\varphi _{j,z}(x) \sqrt{f}(x)\diff x + O(n^{-1/d}) = \alpha_{j,z} + O(n^{-1/d}). \]

\ppn {\bf Variance:} Lemma 4.6$(ii)$ of \cite{Evans08} gives an upper bound on the variance of statistics of type $S_n = \sum_{i=1}^n h_{i,n}(\Xs)$, where $h_{i,n}(\Xs)$ is an arbitrary (measurable) function of the sample point $X_i$ and its $k$-nearest neighbours among the sample $\Xs$. Take here
\[h_{i,n}(\Xs) \doteq  \varphi _{j,z}\left(X_i\right) \sqrt{V_{(k);i}}\]
and see that $\hat{\alpha}_{j,z} = \frac{\Gamma(k)}{\Gamma(k+1/2)}\,\frac{1}{\sqrt{n}} S_n$. Lemma 4.6$(ii)$ of \cite{Evans08} reads
\begin{equation} \var(S_n) \leq 2(n+1) (3+8k^2 dc_0) \E\left(h^2_{i,n}(\Xs)\right), \label{eqn:varSn} \end{equation}
for $n \geq 16k$. Here,
\begin{align*}
\E\left(h^2_{i,n}(\Xs)\right) & = \E\left(\varphi^2_{j,z}\left(X_i\right) V_{(k);i} \right) \\
& = c_0 \E\left(\varphi^2_{j,z}\left(X_i\right) R^d_{(k);i} \right) \\
& = \frac{k}{n} \left( \int_{\R^d}\varphi^2_{j,z}(x)\diff x + O(n^{-1/d})\right),
\end{align*}
from Lemma \ref{lem:mainlemma} with $\phi = \varphi^2_{j,z}$ and $a=1$. By definition, $\int_{\R^d}\varphi^2_{j,z}(x)\diff x =1 $ (orthonormal wavelet basis, Assumption \ref{ass:wav}), hence $\E\left(h^2_{i,n}(\Xs)\right) = \frac{k}{n} (1+ O(n^{-1/d}))$. From this and (\ref{eqn:varSn}), we obtain
\[ \var(\hat{\alpha}_{j,z}) \leq \left(\frac{\Gamma(k)}{\Gamma(k+1/2)}\right)^2\,\frac{1}{n} \,2(n+1) (3+8k^2 dc_0) \frac{k}{n} (1+ O(n^{-1/d})) =  k^3\left(\frac{\Gamma(k)}{\Gamma(k+1/2)}\right)^2 O(n^{-1}), \]
as $n \to \infty$. \qed

\subsection*{Proof of Proposition \ref{prop:gJbias}}

From (\ref{eqn:hatgJ2}) we have
\begin{align}
\hat{g}_J(x) & = \sum_{z \in \Z^d} \hat{\alpha}_{J+1,z} \varphi_{J+1,z}(x) \notag \\
& = \sum_{z \in \Z^d} \frac{\Gamma(k)}{\Gamma(k+1/2)}\,\frac{1}{\sqrt{n}}\sum _{i=1}^n \varphi _{J+1,z}\left(X_i\right) \sqrt{V_{(k);i}} \,\varphi_{J+1,z}(x) \notag  \\
& = \frac{\Gamma(k)}{\Gamma(k+1/2)}\,\frac{\sqrt{c_0}}{\sqrt{n}}\,\sum_{i=1}^n R_{(k);i}^{d/2} \sum_{z \in \Z^d} \varphi_{J+1,z}\left(X_i\right) \varphi_{J+1,z}(x) \notag  \\
& = \frac{\Gamma(k)}{\Gamma(k+1/2)}\,\frac{\sqrt{c_0}}{\sqrt{n}}\,\sum_{i=1}^n R_{(k);i}^{d/2} K_{J+1}(x,X_i),  \label{eqn:gJK}
\end{align}
hence
\[ \E\left(\hat{g}_J(x) \right)  = \frac{\Gamma(k)}{\Gamma(k+1/2)}\,\sqrt{n}\,\sqrt{c_0}\, \E\left(K_{J+1}(x,X_1) R_{(k);1}^{d/2} \right). \]
Lemma \ref{lem:mainlemma} with $\phi = K_{J+1}(x,\cdot)$ and $a =1/2$ establishes that
\[\E\left(K_{J+1}(x,X_1) R_{(k);1}^{d/2} \right) = \frac{1}{\sqrt{n}}\,\frac{\Gamma(k+1/2)}{\Gamma(k)}\,\frac{1}{\sqrt{c_0}} \left( \int_{\R^d}K_{J+1}(x,y) \sqrt{f}(y)\diff y + O(n^{-1/d})\right), \]
and inspection of the proof of Lemma \ref{lem:mainlemma} reveals that the $O(n^{-1/d})$ term holds uniformly in $x \in C$. This means that
\[\E\left(\hat{g}_J(x) \right) = \int_{\R^d}K_{J+1}(x,y) \sqrt{f}(y)\diff y + O(n^{-1/d}) = K_{J+1}\sqrt{f}(x) + O(n^{-1/d}), \]
as $n \to \infty$, uniformly in $x \in C$, proving $(i)$.

\ppn It follows from (\ref{eqn:gJK}) as well that
\[\var\left(\frac{\Gamma(k+1/2)}{\Gamma(k)}\,\sqrt{\frac{n}{k^{3}}}\,\hat{g}_J(x) \right) = \frac{c_0}{k^3} \var\left(\sum_{i=1}^n h_{i,n}(\Xs)\right) \]
where here $h_{i,n}(\Xs) \doteq K_{J+1}(x,X_i) R^{d/2}_{(k);i}$. Lemma \ref{lem:mainlemma} with $a=1$ and $\phi = K^2_{J+1}(x,\cdot)$ yields
\[ \E\left(h^2_{i,n}(\Xs) \right) = \frac{k}{c_0n}\left(\int_{\R^d} K^2_{J+1}(x,y)\diff y +O(n^{-1/d})\right) \]
(with again the $O(n^{-1/d})$-term holding uniformly in $x \in C$). Hence, for $n \geq 16k$, Lemma 4.6$(ii)$ of \cite{Evans08} gives
\[\var\left(\sum_{i=1}^n h_{i,n}(\Xs)\right) \leq 2(n+1)(3+8k^2dc_0) \frac{k}{c_0n}\left(\int_{\R^d} K^2_{J+1}(x,y)\diff y +O(n^{-1/d})\right),\]
whereby
\[\var\left(\frac{\Gamma(k+1/2)}{\Gamma(k)}\,\sqrt{\frac{n}{k^{3}}}\,\hat{g}_J(x) \right) \leq \text{constant} \times \int_{\R^d} K^2_{J+1}(x,y)\diff y +O(n^{-1/d}).\]
This establishes $(ii)$. \qed

\subsection*{Proof of Theorem \ref{thm:uniconst}}

The Mean Integrated Squared Error (MISE) $\E\left(\|\hat{g}_J - \sqrt{f} \|_2^2\right)$ can classically be decomposed into the integrated squared bias and the integrated variance:
\begin{equation} \E\left(\|\hat{g}_J - \sqrt{f} \|_2^2\right) = \|\E\left(\hat{g}_J\right) - \sqrt{f} \|_2^2 + \E\left(\|\hat{g}_J - \E\left(\hat{g}_J\right) \|_2^2\right). \label{eqn:MISE}\end{equation}
For the bias term, it follows from Proposition \ref{prop:gJbias}$(i)$ that
\[\|\E\left(\hat{g}_J\right) - \sqrt{f} \|_2 \leq \|K_{J+1}\sqrt{f} - \sqrt{f} \|_2 + O(n^{-1/d}).\]
As $f \in B^{m,2}(L)$ implies $\sqrt{f} \in B^{m,2}(L')$ for some $0 \leq L' <\infty$, one can call (multivariate versions of) Theorem 8.1$(ii)$ and Corollary 10.1 of \cite{Hardle98} to obtain
\[\sup_{f \in B^{m,2}(L)}\|K_{J+1}\sqrt{f} - \sqrt{f} \|_2 \leq \kappa_1 2^{-J\,m},\]
for some constant $\kappa_1$. Hence, for $n$ large enough,
\begin{equation} \sup_{f \in B^{m,2}(L)}\|\E\left(\hat{g}_J\right) - \sqrt{f} \|_2 \leq \kappa_1 2^{-J\,m} + \kappa_2 n^{-1/d}, \label{eqn:bias}\end{equation}
for constants $\kappa_1,\kappa_2 < \infty$.

\ppn To evaluate  $\int_{\R^d} K^2_{J+1}(x,y)\diff y$ in the righ-hand side of Proposition \ref{prop:gJbias}$(ii)$ we use that
\begin{align*}
\int_{\R^d} K^2_{J+1}(x,y)\diff y & = \int_{\R^d} 2^{2d(J+1)} K^2(2^{J+1}x,2^{J+1}y)\diff y \\
& \leq 2^{2d(J+1)} \int_{\R^d} F^2(2^{J+1}(x-y))\diff y \\
& = 2^{(d+1)J} \int_{\R^d} F^2(v)\diff v,
\end{align*}
where Assumption \ref{ass:kern} justifies the inequality. It follows
\[\var\left(\hat{g}_{J}(x) \right) \leq \text{constant} \times n^{-1} k^3 \left(\frac{\Gamma(k)}{\Gamma(k+1/2)} \right)^2 \left(2^{dJ} \int_{\R^d} F^2(v)\diff v + O(n^{-1/d})\right),\]
which can be integrated over the compact $C$:
\[\E\left(\|\hat{g}_J - \E\left(\hat{g}_J\right) \|_2^2 \right)= \int_{\R^d} \var\left(\hat{g}_J(x) \right) \diff x \leq \text{constant} \times n^{-1} k^3 \left(\frac{\Gamma(k)}{\Gamma(k+1/2)} \right)^2 \left(2^{dJ} \int_{\R^d} F^2(v)\diff v + O(n^{-1/d})\right).\]
Hence, for $n$ large enough, there exists a constant $\kappa'_3<\infty$ such that 
\begin{equation} \E\left(\|\hat{g}_J - \E\left(\hat{g}_J\right) \|_2^2 \right) < \kappa_3' n^{-1} k^3 \left(\frac{\Gamma(k)}{\Gamma(k+1/2)} \right)^2 2^{dJ}. \label{eqn:var}\end{equation}
Plugging (\ref{eqn:bias}) and (\ref{eqn:var}) in (\ref{eqn:MISE}) yields the result. \qed

\end{document}